\newcommand{\R}{\mathbb{R}}
\newcommand{\C}{\mathbb{C}}
\renewcommand{\Re}[1]{\operatorname{Re}\left(#1\right)}
\renewcommand{\Im}[1]{\operatorname{Im}\left(#1\right)}
\renewcommand{\P}[1]{\operatorname{P}\left(#1\right)}
\newcommand{\E}{\operatorname{E}}
\newcommand{\e}{e}
\newcommand{\vct}[1]{\boldsymbol{#1}}
\newcommand{\mtx}[1]{\boldsymbol{#1}}
\newcommand{\<}{\langle}
\renewcommand{\>}{\rangle}
\newcommand{\T}{\mathrm{T}}
\newcommand{\trace}{\operatorname{trace}}
\newcommand{\calE}{\mathcal{E}}
\newcommand{\calH}{\mathcal{H}}
\newcommand{\calP}{\mathcal{P}}
\newcommand{\calQ}{\mathcal{Q}}
\newcommand{\calS}{\mathcal{S}}
\newcommand{\calT}{\mathcal{T}}
\newcommand{\vg}{\vct{g}}
\newcommand{\vh}{\vct{h}}
\newcommand{\vx}{\vct{x}}
\newcommand{\vy}{\vct{y}}
\newcommand{\vzero}{\vct{0}}
\newcommand{\vphi}{\vct{\phi}}
\newcommand{\vvarphi}{\vct{\varphi}}
\newcommand{\mA}{\mtx{A}}
\newcommand{\mD}{\mtx{D}}
\newcommand{\mG}{\mtx{G}}
\newcommand{\mH}{\mtx{H}}
\newcommand{\mP}{\mtx{P}}
\newcommand{\mS}{\mtx{S}}
\newcommand{\mT}{\mtx{T}}
\newcommand{\mU}{\mtx{U}}
\newcommand{\mV}{\mtx{V}}
\newcommand{\mX}{\mtx{X}}
\newcommand{\mZ}{\mtx{Z}}
\newcommand{\mPhi}{\mtx{\Phi}}
\newcommand{\mId}{{\bf I}}
\newcommand{\mzero}{\mtx{0}}
\newcommand{\tha}{{\theta_1}}
\newcommand{\thb}{{\theta_2}}
\newcommand{\Pth}{\mP_\theta}
\newcommand{\Ptha}{\mP_{\theta_1}}
\newcommand{\Pthb}{\mP_{\theta_2}}
\newcommand{\Hth}{\mH_\theta}
\newcommand{\Vth}{\mV_\theta}
\newtheorem{lemma}{Lemma}
\newtheorem{theorem}{Theorem}
\newtheorem{definition}{Definition}
\newtheorem{proposition}{Proposition}
\newenvironment{proof}{\noindent {\bf Proof} }{\endprf\par}
\def \endprf{\hfill {\vrule height6pt width6pt depth0pt}\medskip}
\title{Compressed Subspace Matching on the Continuum}
\author{William Mantzel and Justin Romberg\thanks{W.M.\ is at Qualcomm in San Diego, CA; J.R. is in School of Electrical and Computer Engineering at Georgia Tech in Atlanta, GA.  Email: wmantzel@gmail.com, jrom@ece.gatech.edu.  This work was supported by ONR grant N00014-11-1-0459 and a grant from the Packard Foundation. 
}}
\begin{document}

\maketitle

\begin{abstract}
	
	We consider the general problem of matching a subspace to a signal in $\R^N$ that has been observed indirectly (compressed) through a random projection.  We are interested in the case where the collection of $K$-dimensional subspaces is continuously parameterized, i.e.\ naturally indexed by an interval from the real line, or more generally a region of $\R^D$.  Our main results show that if the dimension of the random projection is on the order of $K$ times a geometrical constant that describes the complexity of the collection, then the match obtained from the compressed observation is nearly as good as one obtained from a full observation of the signal.  We give multiple concrete examples of collections of subspaces for which this geometrical constant can be estimated, and discuss the relevance of the results to the general problems of template matching and source localization.
	
\end{abstract}

\section{Introduction}


We consider the general problem of finding the subspace that best approximates a fixed signal $\vh_0\in\R^N$ from a parameterized collection of $K$-dimensional subspaces  $\{\calS_\theta : \theta\in\Theta\}$, where $\theta$ is a parameter vector chosen from some compact parameter set $\Theta \subset \R^D$.  Given this collection, 
the subspace best matched to $\vh_0$ is the solution to 
\begin{equation}
	\label{eq:sm}
	\bar\theta = \arg\min_{\theta\in\Theta}\min_{\vh\in\calS_\theta}\|\vh_0-\vh\|^2_2 = 
	\arg\min_{\theta\in\Theta} \|\vh_0-\Pth\vh_0\|_2^2.
\end{equation}
The operator $\Pth$ above is the orthogonal projection onto subspace $\calS_\theta$.  In words, program \eqref{eq:sm} returns the (index of the) subspace in the collection which contains the closest point to $\vh_0$ in terms of the standard Euclidean distance.

In this paper, we explore how effectively this matching can be done from a set of indirect linear observations of $\vh_0$.  In particular, we will quantify how effectively this problem can be solved when our observations $\vy=\mPhi\vh_0$ are {\em compressed samples}, meaning $\mPhi$ is a $M\times N$ {\em underdetermined} matrix whose rows are diverse.  For simplicity, we will consider the case where the entries of $\mPhi$ are independent and Gaussian, but we expect that the majority of our results could be extended to other types of measurement scenarios.

Given measurements $\vy=\mPhi\vh_0$, we match a subspace using a variation of \eqref{eq:sm}:
\begin{equation}
	\label{eq:csm}
	\hat\theta = \arg\min_{\theta\in\Theta}\min_{\vh\in\calS_\theta} \|\vy-\mPhi\vh\|^2_2 =
	\arg\min_{\theta\in\Theta} \|\vy - \tilde\Pth\vy\|_2^2,
\end{equation}
where $\tilde\Pth$ is the orthogonal projection onto the range of $\mPhi\Pth$, the $K$ dimensional subspace of $\R^M$ consisting of measurements induced by signals in $\calS_\theta$.
We would like the compressed estimate produced by \eqref{eq:csm} to be close to the standard estimate in \eqref{eq:sm}.  We will judge the difference between these two estimates based on how well the subspaces they return can approximate the original signal $\vh_0$.  Our main result bounds the difference in the relative approximation errors $\hat{E}^2 - \bar{E}^2$, where
\begin{align*}
	\bar{E}^2 &= \frac{\|\vh_0- \mP_{\bar\theta}\vh_0\|_2^2}{\|\vh_0\|^2_2} 
	= 1 - \frac{\|\mP_{\bar\theta}\vh_0\|_2^2}{\|\vh_0\|^2_2}, \quad\text{and}\\
	\hat{E}^2 &= \frac{\|\vh_0- \mP_{\hat\theta}\vh_0\|_2^2}{\|\vh_0\|^2_2} 
	= 1 - \frac{\|\mP_{\hat\theta}\vh_0\|_2^2}{\|\vh_0\|^2_2}.
\end{align*}
Since all of the terms above scale with the size of $\vh_0$, we will assume (without loss of generality) from this point forward that $\|\vh_0\|_2=1$, and derive a bound for the gap
\begin{equation}
	\label{eq:Ediff}
	\hat{E}^2 - \bar{E}^2 = \|\mP_{\bar\theta}\vh_0\|_2^2 - \|\mP_{\hat\theta}\vh_0\|_2^2.
\end{equation}
Notice that the difference above must be positive, as $\bar\theta$ as given by \eqref{eq:sm} is the index for the optimal subspace.  Note also that we are making no assumptions about whether or not $\vh_0$ is in or even close to one of the $\calS_\theta$.
Bounding \eqref{eq:Ediff} also does not give us any immediate guarantees on how close the indices $\bar\theta$ and $\tilde\theta$ are to one another; we are only judging the effectiveness of the compressed program by its ability to produce a subspace which can estimate $\vh_0$ almost as well as the optimal one.  
Given some assumptions about the collection of subspaces (say, that $\|\Ptha-\Pthb\|$ is small only when $\|\theta_1-\theta_2\|_2$ is small), it may be possible to infer a bound on the size of $\hat\theta-\tilde\theta$ from a bound on \eqref{eq:Ediff}, but we do not explore this here.

Our main results, stated as Theorems~\ref{th:supW}--\ref{th:suphperp} in Section~\ref{sec:main-results}, provide a bound on this approximation gap whose most significant term has the form
\begin{equation}
	\label{eq:Ebound1}
	\hat{E}^2 - \bar{E}^2 ~\lesssim~ \sqrt{\frac{K(\Delta+\log K)}{M}},
\end{equation}
where $\Delta$ is a quantity that captures the geometrical complexity of the collection of subspaces $\{\calS_\theta\}$.  As described in detail in Section~\ref{sec:main-results}, it is related to the covering number of $\{\calS_\theta\}$ under the standard metric for subspaces, the operator norm of the difference of projectors $d(\theta_1,\theta_2) = \|\Ptha-\Pthb\|$.  In all of our motivating applications, we will have $\Delta = \text{small constant} + \log(K)$.  This means that we can control the approximation gap in \eqref{eq:Ediff} by making $M$ slightly larger than $K$ (and potentially significantly smaller than $N$).  The result \eqref{eq:Ebound1} holds ``with high probability'' for an arbitrary (but fixed) $\vh_0$, where the randomness comes from the matrix $\mPhi$.

As this requires solving a least-squares problem of dimension $K$, it will only be practical to search over all of $\Theta$ when $D$ (the dimension of the $\theta\in\Theta$) is small ($1$ or $2$).  This is indeed the case in the applications we discuss in the next section.    

\subsection{Motivating applications}
\label{sec:motivating-applications}


\subsubsection{Pulse detection from compressed samples}

Perhaps the most fundamental example of a subspace matching problem is estimating the time shift (or ``time of arrival'') of a known pulse signal.  In this case, the family $\calS_\theta$ consists of one dimensional subspaces corresponding to scalings of a shifted template function $f_0(t)$; $\calS_\theta = \{a\, f_0(t-\theta) ~ : ~ a \in \R\}$.  Treatment of this problem with observations made through a random matrix $\mPhi$ was first given in \cite{davenport07sm,davenport10si}, where it was called the {\em smashed filter}.  An analysis of this of problem with a $\mPhi$ corresponding to measurements made in the frequency domain can be found in \cite{eftekhari13ma}.

Figure~\ref{fig:cGabor} illustrates a variation on this problem for which we give a thorough analysis in Section~\ref{sec:gabor}.  Here we are trying to find the shift and modulation frequency of a pulse of fixed shape that most closely matches the observed signal.  The family of subspaces is indexed by the multi-parameter $\theta=(\tau,\omega)$, and consists of finely sampled\footnote{As we will discuss further in Section~\ref{sec:geometric-regularity}, the number of samples $N$ plays no role in our analysis.  We will simply assume that is is large enough so that the relationships between the $g(t;\cdot)$ are preserved.} Gabor functions at different shifts, frequencies, and phases.  We define
\begin{equation}
	\label{eq:gaborct}
	g(t;\tau,\omega,a,b) = e^{-(t-\tau)^2/\sigma^2}\left(a\cos(\omega t) + b\sin(\omega t)\right)
\end{equation}
and $g[n;\tau,\omega,a,b] = g(t_1+n\,d;\tau,\omega,a,b)$ for some sample spacing $d$ and $n=1,\ldots,N$; the vector $g[n;\cdot]$ contains equally spaced samples of $g(t;\cdot)$ on the interval $[t_1,t_2]$ with $t_2=t_1+Nd$.  We can now write  
\begin{equation}
	\label{eq:gaborSth}
	\calS_\theta = \{g[n;\tau,\omega,a,b]~a,b\in\R,~(\tau,\omega)=\theta\in\Theta\}.
\end{equation}
Each subspace has a dimension of $K=2$, and $\Theta$ is some subset of the time-frequency plane; here we will fix the specific region as $\Theta = [-0.25,0.25]\times[2\pi\cdot 50,2\pi\cdot 250]$.  The example in Figure~\ref{fig:cGabor} uses a $\vh_0$ consisting of (again finely spaced) samples of the signal $h(t) = (1+\cos(\pi t/\beta_0))\cdot\cos(2\pi\cdot5 t/\beta_0 + \pi/3)\cdot 1_{|t|\leq\beta_0}$ where $\beta_0 = 5/128$.  Notice that $\vh_0$ is not actually included in any of the subspaces in $\calS_\theta$; but the closest match does indeed comes at the expected values of $\hat\tau=0$ and $\hat\omega=2\pi\cdot 128$.

We see from the figure that gap between the uncompressed energy surface $\|\vh_0-\Pth\vh_0\|_2^2$ and the compressed energy surface $\|\vy-\tilde\Pth\vy\|_2^2$ gets tighter as $M$ increases.  This is mathematically characterized by our main theorems along with the computations in Section~\ref{sec:gabor}.

\begin{figure}[ht]
	\centering
	\begin{tabular}{cc}
		\includegraphics[width=2in]{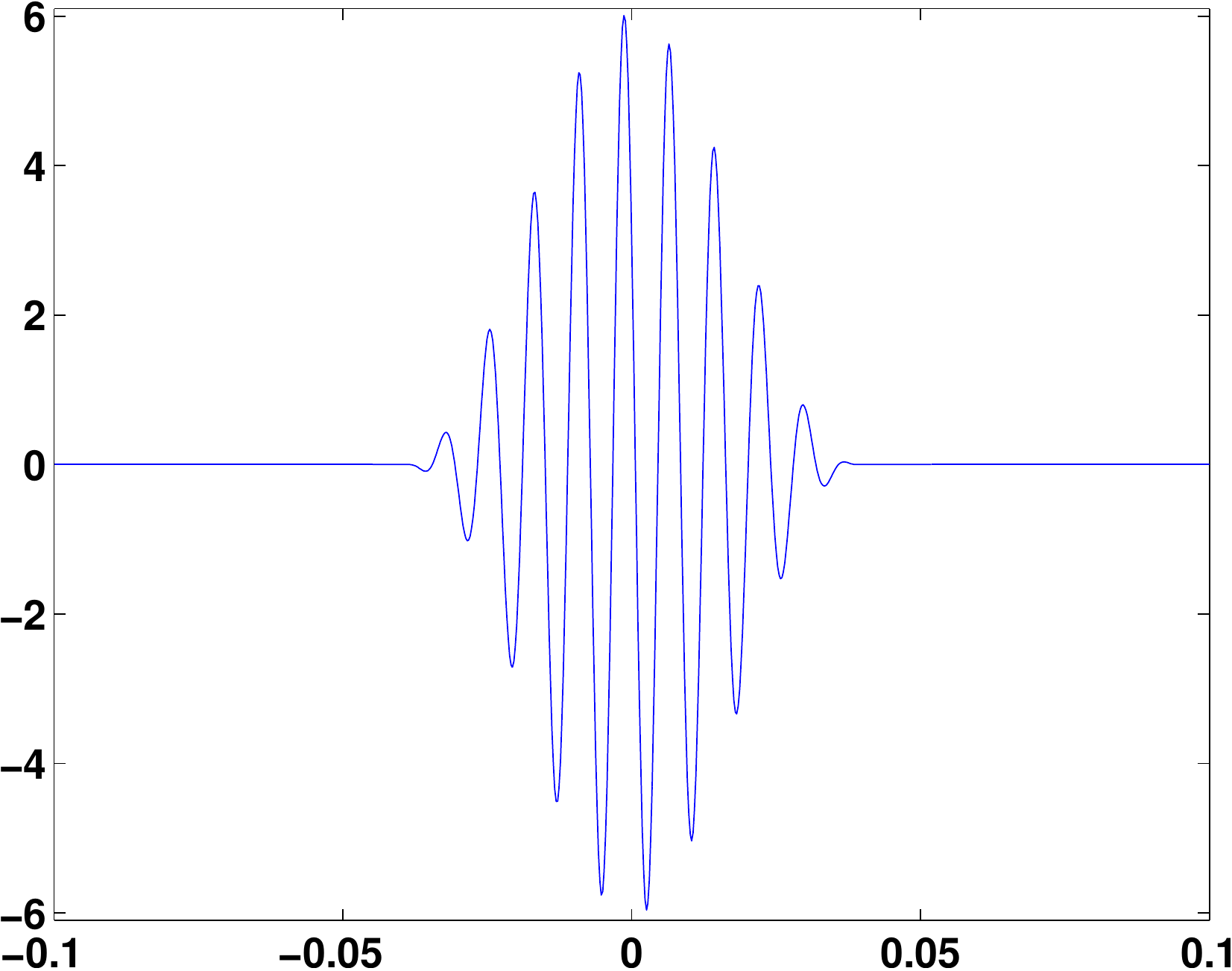} &
		\hspace{.1in} 
		\includegraphics[width=2in]{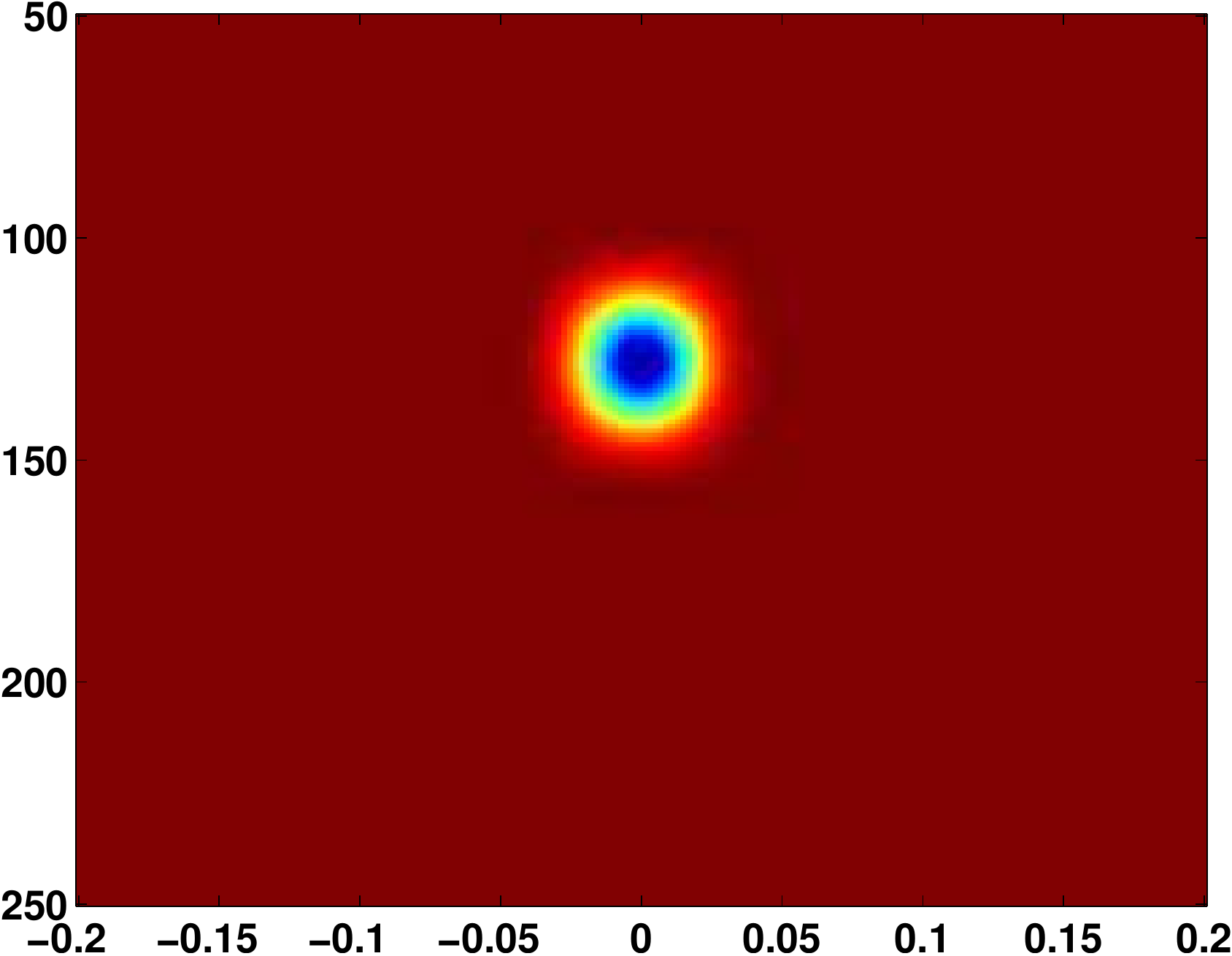} \\
		(a) & (b)
	\end{tabular}
	
	\vspace{.25in}
	
	\begin{tabular}{ccc} 
		\includegraphics[width=2in]{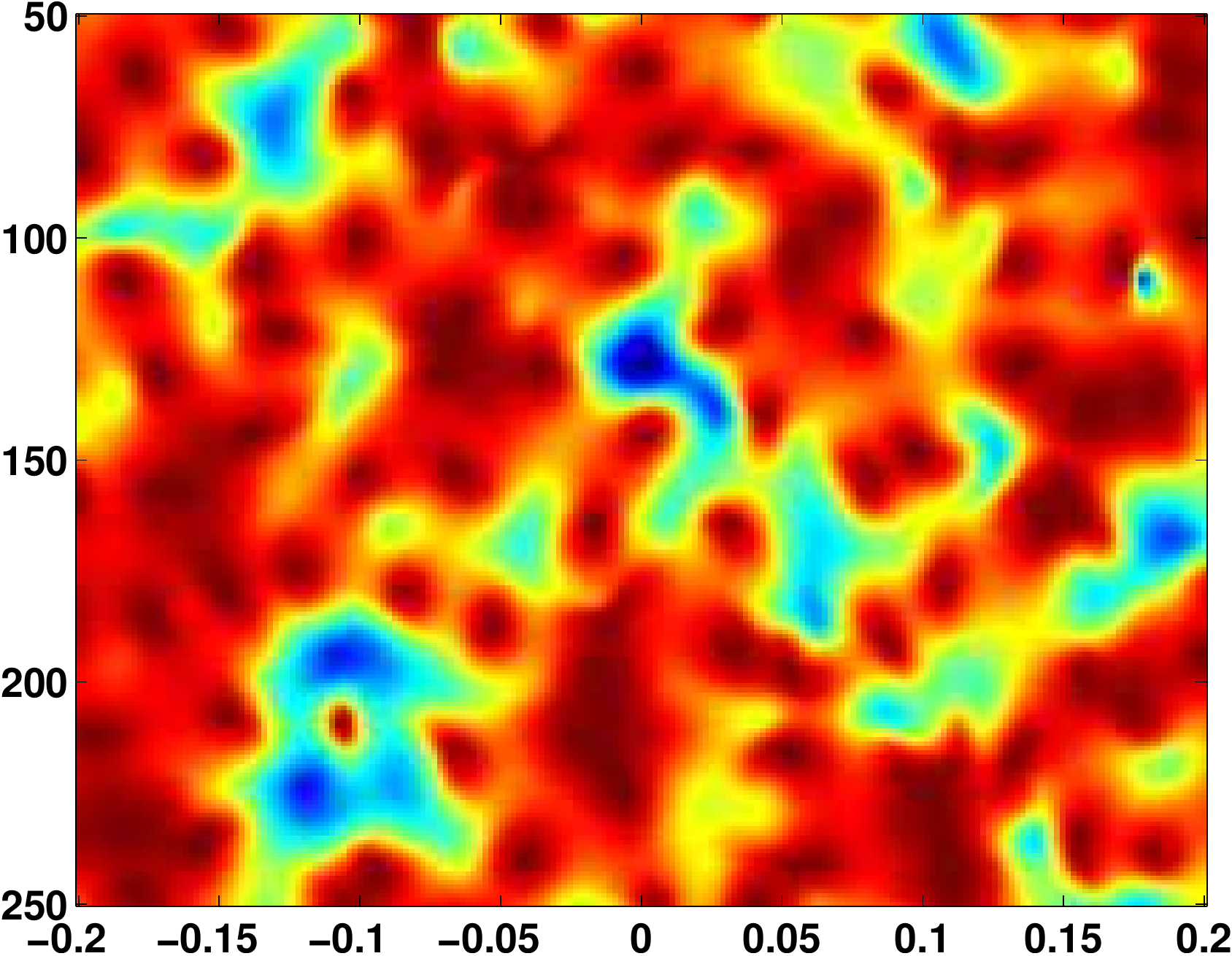} &
		\hspace{.1in} 
		\includegraphics[width=2in]{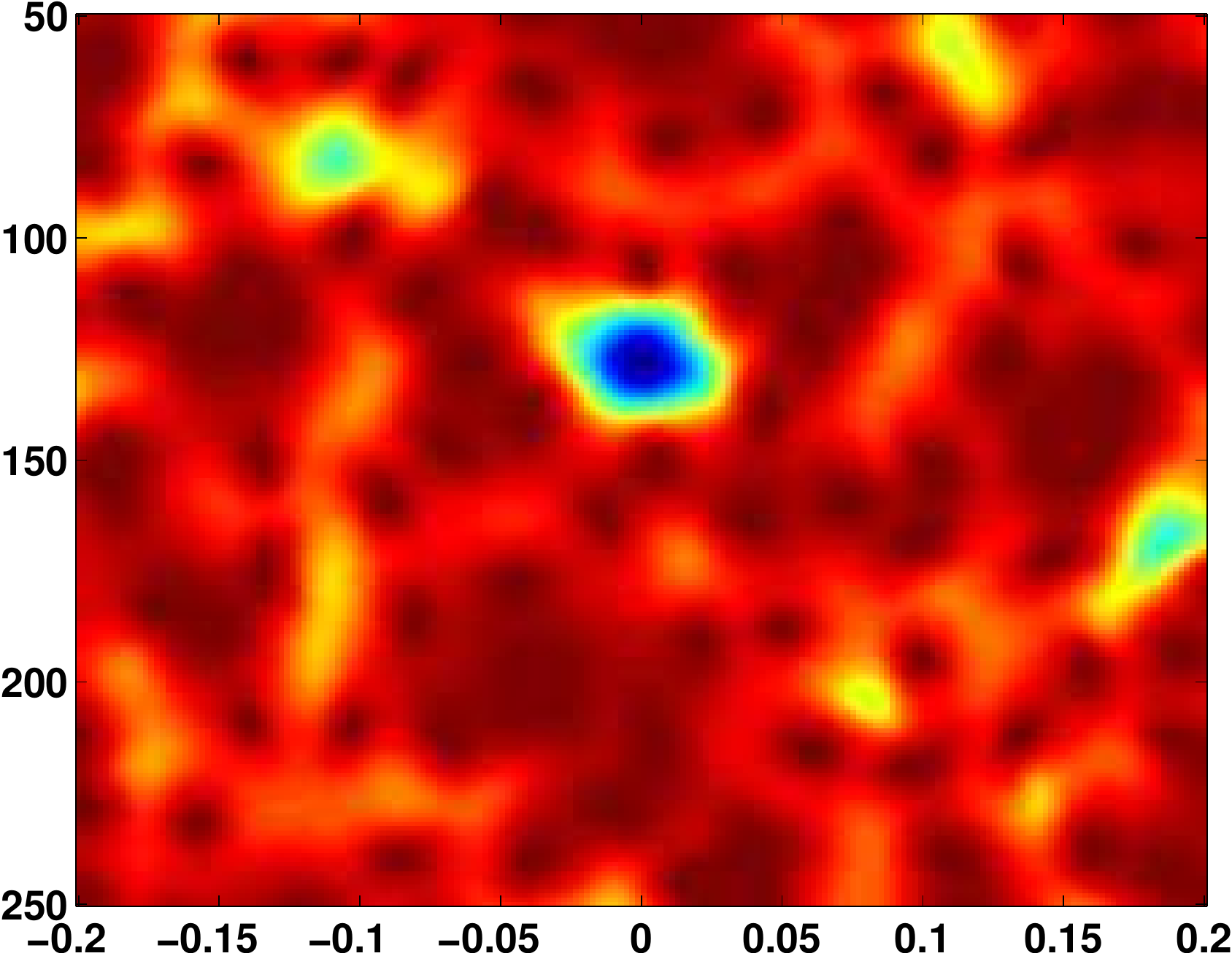} &
		\hspace{.1in} 
		\includegraphics[width=2in]{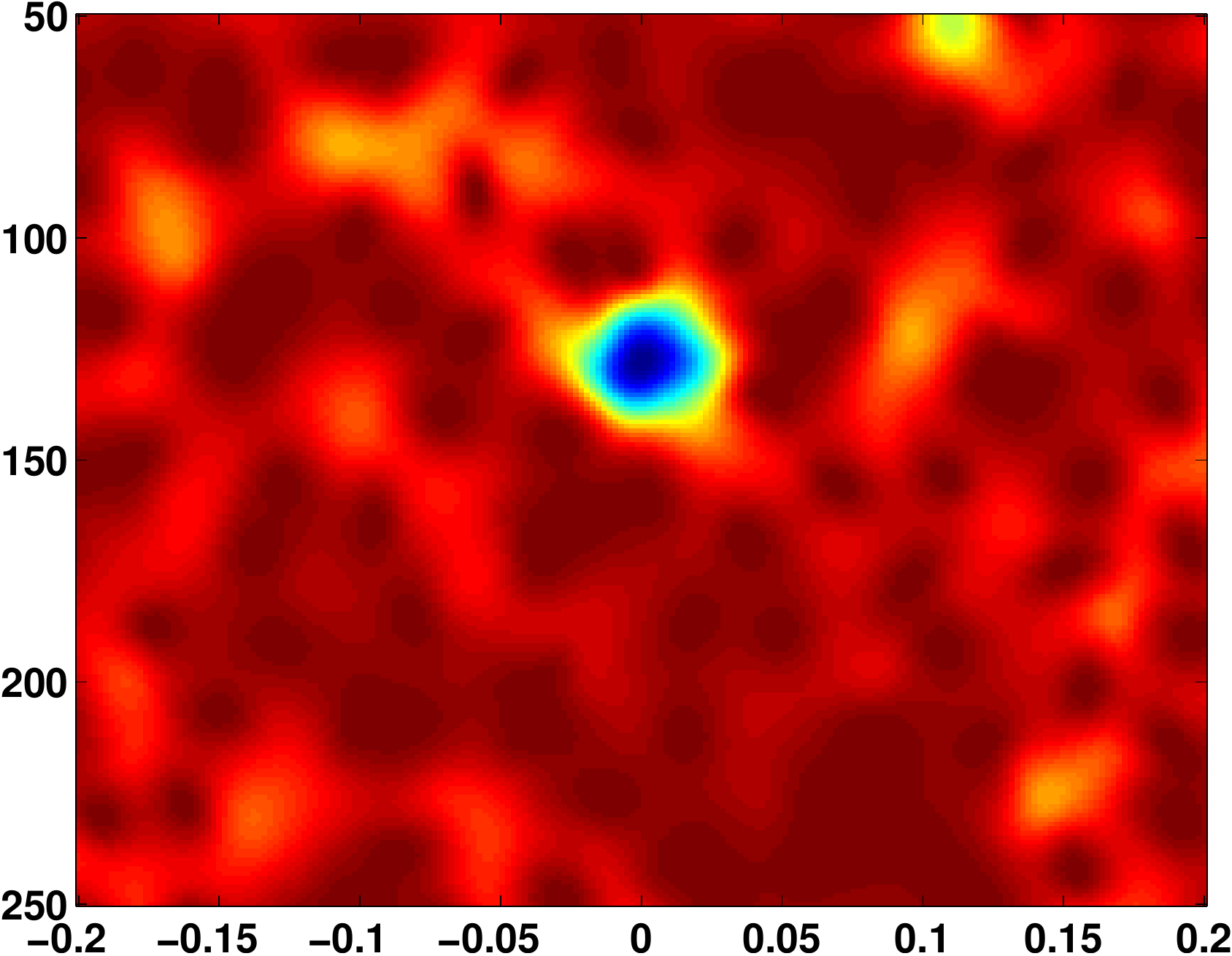} \\
		(c) & (d) & (e)
	\end{tabular}
\caption{\small\sl (a) The original signal $\vh_0$ is a raised cosine window on the interval $|t|\leq 5/128$ and is modulated to frequency $\omega_0 = 2\pi\cdot 128$.  (b)  The energy difference image $\|\vh_0-\Pth\vh_0\|_2^2$.  Each point in this image is the difference between the energy in $\vh_0$ and the energy of its orthogonal projection onto a two dimensional subspace corresponding to a particular shift and modulation frequency, as in \eqref{eq:gaborct}--\eqref{eq:gaborSth}.  (c) One realization of the compressed energy difference image $\|\vy-\tilde\Pth\vy\|_2^2$ formed from $\vy=\mPhi\vh_0$ with $M=10$; (d) a realization with $M=20$; (e) $M=30$.  The minima in (c),(d), and (e) are all in exactly the same location as in (b), but clearly gets more distinct as $M$ increases.
}
\label{fig:cGabor}
\end{figure}


\subsubsection{Fast forward modeling for source localization}

Another application of compressed subspace matching is accelerating modeling algorithms for source localization.  For example, \cite{jasa2012compressive} shows how a {\em matched-field processing} (MFP) algorithm, working directly in the compressed domain, can be used to locate an underwater acoustic target in range and depth based on recorded frequency-domain acoustic measurements at a series of receivers at known locations.  Although the spectral content of the source is generally unknown, there exists a model for the frequency response, the {\em Green's function} $\mG_\theta(\omega) \in \C^N$, between any hypothetical source location $\theta = (r,z)$ (containing range and depth) and receivers $n=1,\ldots,N$.  If a source at frequency $\omega$ and complex amplitude $\beta(\omega) \in \C$ is active at location $\theta$, the response at the $N$ receivers is modeled as
\[
	\vh(\omega) = \beta(\omega)\mG_\theta(\omega) + \mathrm{noise}.
\]
Given observations at one or more frequencies\footnote{If the source signature is known across frequency, the responses $\mG_\theta(\omega)$ are added coherently into a vector of length $N$; if the signature is unknown, the responses are concatenated into a longer vector.}, we can estimate the source location by solving \eqref{eq:sm} where the (real-valued, two-dimensional) subspaces are given by the column space of the $2N\times 2$ matrices
\[
	\mG'_\theta = 
	\begin{bmatrix}
		\Re{\mG_\theta} & -\Im{\mG_\theta} \\ \Im{\mG_\theta} & \Re{\mG_\theta}
	\end{bmatrix}
\]

Typically, the responses $\mG_\theta$ are only known implicitly, coming from higher-level knowledge about the geometry and materials in the channel.  They can be acquired through simulating a source at each receiver location, and backpropagating the response over the locations of interest; if there are many receivers and the simulations are done across many frequencies, these simulations can be computationally expensive.  In \cite{jasa2012compressive}, it was demonstrated that the matched field processing could be done effectively from a compressed version of the $\mG_\theta$.  Sources with randomly generated amplitudes were simulated at all receiver locations simultaneously, resulting in a series of indirect observations $\vphi_1^\T\mG'_\theta,\ldots,\vphi_M^\T\mG'_\theta$ of the Green's function.  These computations can be stacked on top of one another to form $\mPhi\mG'_\theta$ for all $\theta$ --- then a set of given observations can be encoded as $\vy=\mPhi\vh$ and the compressed version \eqref{eq:csm} of the problem is solved.

Figure~\ref{fig:cMFP} shows the classical and compressive error surfaces for a shallow-water channel as a function of range and depth using $M=6$ compressive measurements. Here, the Green's function is modeled as a Pekeris waveguide \cite{jensen1994computational} at a frequency of $150~\text{Hz}$ with $N=37$ uniformly spaced receivers along a vertical array between $10$ meters and $190$ meters in depth. Note that although the compressive error surface is a rough approximation to the true error surface, they are close enough to share a characteristically similar minimizer.

\begin{figure}
	\centering
	\begin{tabular}{cc}
		\includegraphics[width=3in]{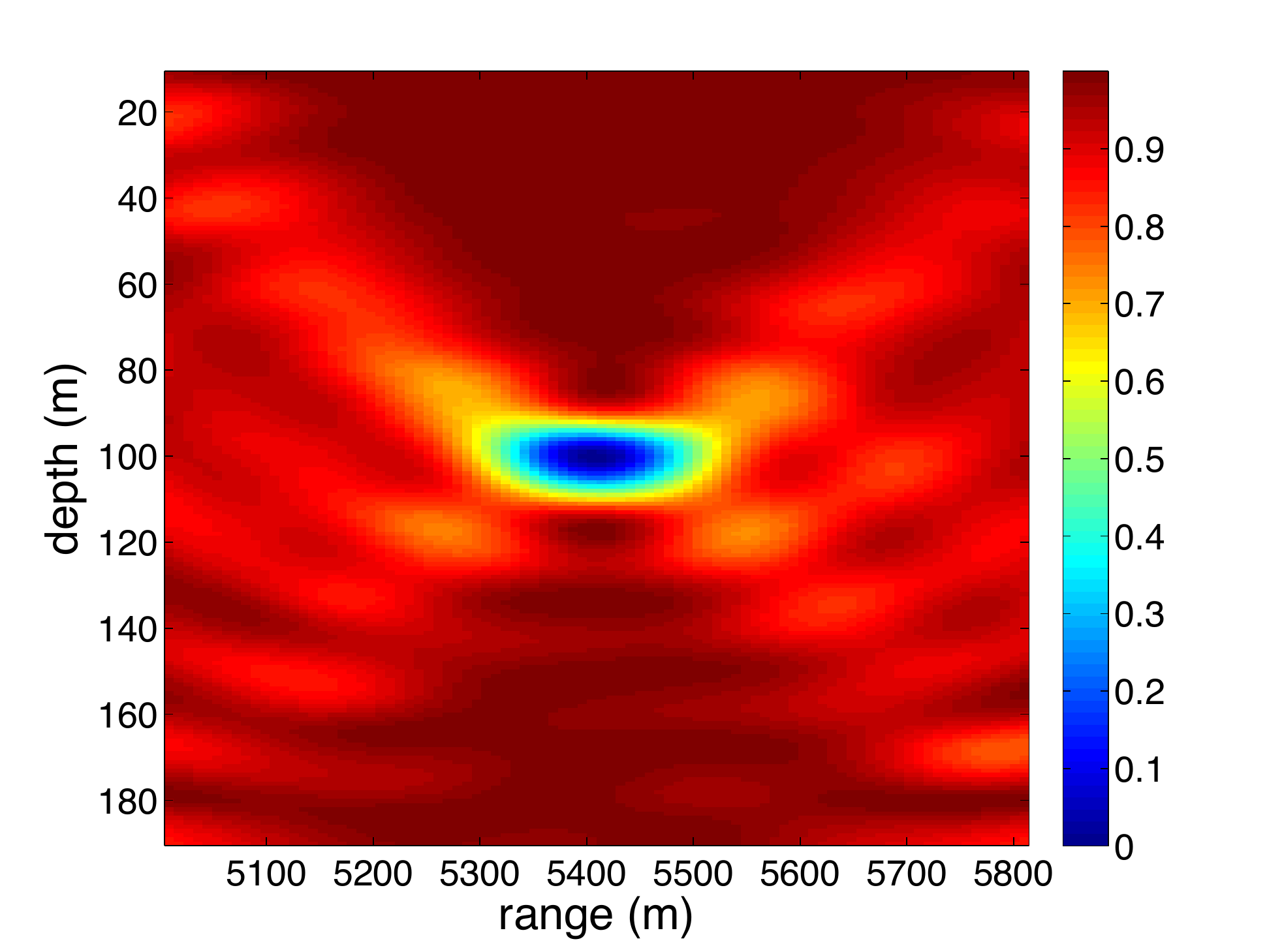} &
		\includegraphics[width=3in]{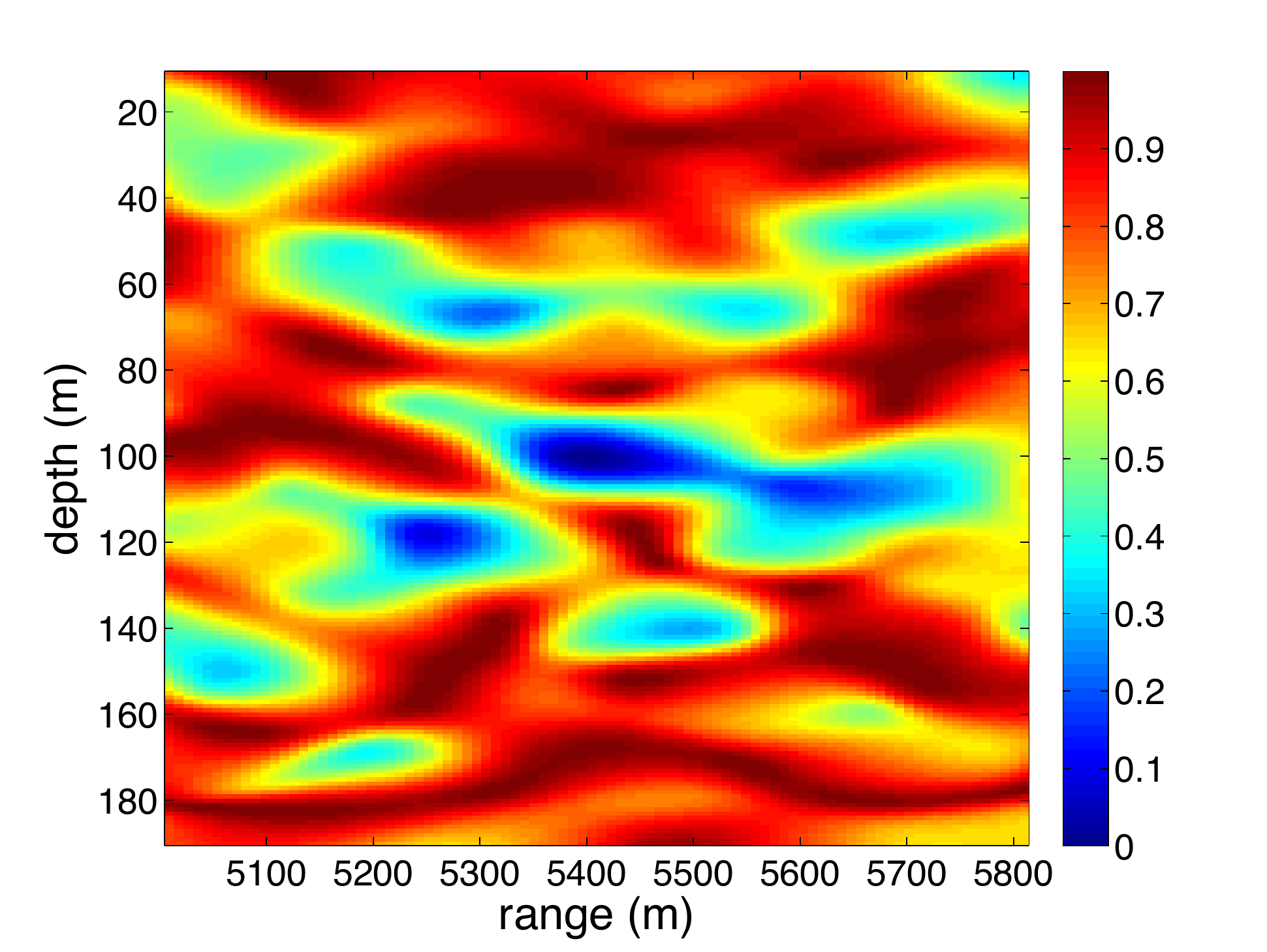} \\
		(a) & (b)
	\end{tabular}
	\caption{\small\sl A subspace matching problem for an underwater acoustic channel.  (a) The energy difference $\|\vh_0-\Pth\vh_0\|_2^2$ for source located at $\theta=(r,z)=(5405{\rm m},100 {\rm m})$.  This image was formed using full knowledge of the (real-valued) Green's function $\mG'_\theta(\omega)$ at single frequency.  (b) The difference in the compressed domain formed using $\mPhi\mG'_\theta$ with $M=6$.}
	\label{fig:cMFP}
\end{figure}


\subsection{Linear encoding of a collection of subspaces}
\label{sec:linear-encoding}


A key part of analyzing the direct-versus-compressed performance gap in \eqref{eq:Ediff} is showing that under appropriate conditions on the number of rows, the random $M\times N$ matrix $\mPhi$ provides a {\em stable embedding} for the collection of subspaces $\{\calS_\theta\}$.  Theorem~\ref{th:ProjectEmbed} below says that if $\vx_1$ and $\vx_2$ are both in (possibly different) subspaces in $\{\calS_\theta\}$, say $\vx_1\in\calS_{\theta_1},~\vx_2\in\calS_{\theta_2}$, then with high probability, the linear mapping $\mPhi$ preserves their distance.  That is, there exists a $\delta > 0$ such that
\begin{equation}
	\label{eq:subspacerip}
	(1-\delta)\|\vx_1-\vx_2\|_2^2 ~\leq~
	\|\mPhi(\vx_1-\vx_2)\|_2^2 ~\leq~ 
	(1+\delta)\|\vx_1-\vx_2\|_2^2
	\quad \vx_1\in\calS_{\theta_1},~\vx_2\in\calS_{\theta_2},~\text{for all }\theta_1,\theta_2\in\Theta.
\end{equation}
In fact, Theorem~\ref{th:ProjectEmbed} shows that we can take
\begin{equation}
	\label{eq:deltabound1}
	\delta ~\lesssim~ \sqrt{\frac{K(\Delta+\log K)}{M}},
\end{equation}
where $\Delta$ is the same measure of geometric complexity of $\{\calS_\theta\}$ as in the subspace matching bound in \eqref{eq:Ebound1}.  This means that the embedding is uniformly stable for all $\theta_1,\theta_2$ when the number of rows in $\mPhi$ is just a little larger than the dimension $K$ of each subspace.

Random projections are known to be efficient stable linear embeddings in sparse signal models, manifolds, etc., so it instructive to compare the content of \eqref{eq:subspacerip} and \eqref{eq:deltabound1} to previous results in the literature.  For an arbitrary $\calQ\subset\R^N$, let $\delta(\calQ)$ be defined as
\[
	\delta(\calQ) = \max_{\vx_1,\vx_2\in\calQ} \frac{\left|\|\mPhi(\vx_1-\vx_2)\|_2^2 - \|\vx_1-\vx_2\|_2^2\right|}{\|\vx_1-\vx_2\|_2^2}.
\]
When $\calQ$ is a finite set of size $|\calQ|=Q$, then a now classical result knows as the {\em Johnson-Lindenstrauss} Lemma \cite{johnson84ex,dasgupta03el} tells us that we can take 
\[
	\delta(\calQ) \lesssim \sqrt{\frac{2\log Q}{M}},
\]
meaning that $\mPhi$ preserves distances when $M\gtrsim\log Q$.  

An analog to the Johnson-Lindenstrauss Lemma has been developed for arbitrary sets $\calQ$ \cite{klartag05em,mendelson07re,dirksen13ta} (i.e.\ sets with infinite cardinality).  In these works, the $\log Q$ above is (approximately speaking) replaced with the {\em Gaussian width}.  These results are closely related to our Theorem~\ref{th:ProjectEmbed}, with the main difference being that our bounds explicitly rely on the geometry of the set of subspaces, rather than a set of points.

When $\calQ$ is a linear subspace of dimension $K$, then $\delta(\calQ)$ is directly related to the singular values of $\mPhi$; in particular, it is the maximum deviation of the singular spectrum from $1$.  Random matrices of this kind are well conditioned when they are overdetermined by a constant factor \cite{marchenko67di,davidson01lo,vershynin12in}; we can take  
\[
	\delta(\calQ) \lesssim \sqrt{\frac{K}{M}}.
\]

A straightforward extension \cite{baraniuk08si,candes06ne,vershynin12in} of this result yields a uniform embedding guarantee when $\calQ$ is all vectors in a finite collection of subspaces $\{\calS_\theta~:~\theta\in\Theta\}$.  In this case, 
\begin{equation}
	\label{eq:csdelta}
	\delta(\calQ) \lesssim \sqrt{\frac{2K + 2\log|\Theta|}{M}},
\end{equation}	
where $|\Theta|$ is the number of subspaces.  In the literature on compressive sensing, where the goal is to recover a sparse vector from linear measurements through $\mPhi$, stable recovery for all $K$-sparse vectors can be guaranteed \cite{candes06st,candes08re} by considering $\calQ$ to be the set of all vectors in $\R^N$ with at most $K$ active components, a union of subspaces of size $|\Theta| = {N\choose K} \lesssim e^{K\log(N/K)}$.  Sparse vectors, then, can be encoded using $M\gtrsim K\log(N/K)$ random measurements.

In this paper, we are interested in collections of subspaces indexed by $\theta$ on the continuum, meaning that $|\Theta|$ is (uncountably) infinite.  This requires a more nuanced treatment than simply applying \eqref{eq:csdelta}.  The bound in \eqref{eq:csdelta} is derived by treating each subspace individually, and so depends only on the number of subspaces to treat.  We will more carefully take advantage of the relationships between subspaces that are very close to one another.  The geometric constant $\Delta$ in \eqref{eq:deltabound1} is, roughly speaking, a measure of how well the collection $\{\calS_\theta\}$ can be approximated by a finite cover across many resolutions.

There is another important result in the sparse and low-rank recovery literature that explicitly handles a continuum of subspaces.  In \cite{recht10gu}, a uniform embedding result was presented for the set of $N\times N$ matrices of rank at most $K$ (this result was refined in \cite{candes11ti}) passing through a random projection (in this case, $\mPhi$ would be $M\times N^2$).  The argument essentially reduces to taking $\calQ$ to be {\em all} (pairs of) $K$ dimensional subspaces of $\R^N$, with 
\[
	\delta(\calQ) \lesssim \sqrt{\frac{NK}{M}}.
\]
In this paper, we consider (a possibly very small) subset of these spaces, and so in general do not require $M$ to be as large.


\subsection{Related work} 


The compressed subspace matching problem, and its analysis put forth in this paper, is closely related to {\em compressed sensing}, where the goal is to recover a structured (i.e.\ sparse) vector from an underdetermined (and usually random) linear observations.  As mentioned in Section~\ref{sec:linear-encoding}, many of the results in this field hinge on a measurement matrix $\mPhi$ stably embedding a collection of subspaces \cite{candes06ne,donoho06co,candes06st}.  Embedding results have been derived for different finite collections of subspaces of interest to different applications, including sparse signals \cite{baraniuk08si}, signals with certain types of ``group sparsity'' \cite{eldar10bl,boufounos11sp}, and arbitrary finite-size collections \cite{blumensath09sa,eldar09ro}.  In contrast to these works, this paper focusses on a union of subspaces indexed by a continuous parameter.  We could of course sample $\theta$ on a grid and apply these known results, but the bounds acquired this way would depend on the size of the grid chosen (which is deeply unsatisfying in that it violates the spirit of the continuum), and there are invariably complications with values of $\theta$ that are ``off the grid'' (between grid points).  Our approach, taking $\theta$ as a point in the continuum, is simply more direct.  The recent work \cite{tang13ne}, which looks at the particular case of estimating the frequencies of sinusoids, works from this same perspective.

The recent paper \cite{blumensath11sa} also discusses the general problem of recovering a signal which belongs to an arbitrary union of subspaces.  This work essentially shows that if $\mPhi$ is an embedding over this space, then the signal can be reconstructed using a projected Landweber algorithm, where each iteration includes a solution of \eqref{eq:sm}.  This paper provides guarantees for the direct solution of \eqref{eq:csm} --- on the continuum, \eqref{eq:sm} and \eqref{eq:csm} are comparable computationally --- and provides conditions under which $\mPhi$ is likely to provide such an embedding.

As we discuss in Section~\ref{sec:motivating-applications}, a scenario in which solving the subspace matching problem is of particular interest is in estimating the time-of-arrival of a pulse.  Specialized algorithms for this particular problem have also been developed in \cite{gedalyahu10ti} and \cite{yoo12co}.  More general algorithms for recovering signals from unions of subspaces can be found in \cite{ekanadham11re,fyhn13co}.

Other closely related recent results come from the area of manifold embeddings.  In \cite{baraniuk09ra}, it is shown that pairwise distances between points on a manifold are preserved through a random projection; they derive a sufficient condition on the dimension of this projection which involves the volume of the manifold, its ambient and intrinsic dimensions, and its curvature.  This result was later refined in \cite{clarkson08ti}, which removed the dependence on the ambient dimension and used a more refined definition of curvature.  Variations on these results for a variety of different projection operators can be found in \cite{yap13st}.  In this work, we are explicitly concerned with collections of subspaces that are smoothly parameterized rather than a collection of vectors.  One advantage of this is it allows for a more natural treatment of scale --- we do not have any size constraints on our signals of interest.  Similar to these manifold embedding results, our bounds do depend on quantifying the geometry of the set of interest.  The main geometrical concept we use, the size of an finite approximation of the collection of subspaces to a certain precision, is different than the descriptors typically used to manifolds (curvature, volume, etc).

One of the keys to establishing that the gap in \eqref{eq:Ediff} is small is showing that $\mPhi$ preserves the distances between the subspaces we are considering; this is the content of Theorem~\ref{th:ProjectEmbed} in Section~\ref{sec:main-results}.  As mentioned in Section~\ref{sec:linear-encoding}, there are closely related results in the probability literature.  In \cite{klartag05em}, a type of Johnson-Lindenstrauss lemma for arbitrary sets was put forth where the geometry of the set was measured  using the ``generic chaining complexity''.  The methods developed there were refined in \cite{mendelson10em,mendelson07re}, and an explicit relationship between the dimension of the projection and the accuracy of the embedding was proven in \cite{dirksen13ta}.  In \cite{bourgain13to}, similar embedding results are established for sparse random matrices.  In this work, we explicitly quantify the geometrical complexity of the collection of subspaces in terms of the natural metric (operator norm of the distance between projectors) which allows us to calculate the bounds explicitly for several scenarios of immediate interest in signal processing (see Section~\ref{sec:geometric-regularity}).

Very recent and independent work \cite{dirksen14di} applies advanced chaining techniques directly to the problem of embedding a collection of subspaces through a random matrix.  It is possible that the results in this work could remove the logarithmic factor in Theorem~\ref{th:ProjectEmbed}.

\section{Main results}
\label{sec:main-results}


For a fixed $\vh_0$, we wish to bound the difference in the approximation error $\hat{E}^2$ of the subspace matched from indirect measurements $\vy = \mPhi\vh_0$ and the approximation error $\bar{E}^2$ of the subspace matched from a direct observation.  (The direct observation error can be thought of as an oracle error).  Our analysis will assume that the matrix $\mPhi$ is Gaussian, with
\begin{equation}
	\label{eq:PhiGauss}
	\Phi[m,n] ~\sim~ \mathrm{Normal}\left(1,M^{-1}\right),
\end{equation}
and all of the entries being independent.

To do this, we will establish a {\em uniform bound} on the difference between the energy of the projection of $\vh_0$ onto subspace $\calS_\theta$ is close to the energy of the projection of $\vy$ onto the range of $\mPhi\mP_\theta$.  We will show that
\begin{equation}
	\label{eq:supW}
	\sup_{\theta\in\Theta} \left|\, \|\tilde\Pth\vy\|_2^2 - \|\Pth\vh_0\|_2^2 \right| ~\leq~\delta,
\end{equation}
where $\delta$ depends on the number of rows $M$ and the geometry of the collection of subspaces $\{\calS_\theta\}$.  It immediately follows that
\begin{align*}
	\hat{E}^2 - \bar{E}^2 &= \|\mP_{\bar\theta}\vh_0\|_2^2 - \|\mP_{\hat\theta}\vh_0\|_2^2 \\
	&\leq \|\tilde\mP_{\hat\theta}\vy\|_2^2 - \|\mP_{\hat\theta}\vh_0\|_2^2 + 
	\|\mP_{\bar\theta}\vh_0\|_2^2 - \|\tilde\mP_{\bar\theta}\vy\|_2^2 \\
	&\leq 2\delta,
\end{align*}
where the first inequality follows from the fact that $\|\tilde\mP_{\hat\theta}\vy\|_2^2 > \|\tilde\mP_{\bar\theta}\vy\|_2^2$ by the definition of $\hat\theta$ in \eqref{eq:csm},
\[
	\hat\theta = \arg\min_{\theta\in\Theta}\|\vy-\tilde\Pth\vy\|_2^2 = \arg\max_{\theta\in\Theta}\|\tilde\Pth\vy\|_2^2.
\]

With $\mPhi$ random, we can interpret \eqref{eq:supW} as a bound on the supremum of a {\em random process} indexed by $\theta\in\Theta$.  Our upper bound $\delta$ will not hold in an absolute sense, but rather with an appropriately high probability.  The supremum in \eqref{eq:supW} follows from bounding the suprema of two related random processes:
\begin{align*}
	\text{(C1)}\qquad&\sup_{\theta\in\Theta}\|\Pth - \Pth\mPhi^\T\mPhi\Pth\| \leq \delta_1, \\
	\text{(C2)}\qquad&\sup_{\theta\in\Theta}\|\Pth\mPhi^\T\mPhi\Pth^\perp\vh_0\|_2 \leq \delta_2.
\end{align*}
The first bound (C1) tells us that all of the projection operators $\Pth$ retain their structure when viewed through the matrix $\mPhi$.  The norm in (C1) is the standard operator norm (largest magnitude eigenvalue in this case).  The second bound (C2) tells us that $\Pth^\perp\vh_0$ stays approximately orthogonal to the subspace $\calS_\theta$ after being passed through $\mPhi^\T\mPhi$.

Our first main result, which we prove in Section~\ref{sec:supWproof} relates (C1) and (C2) to our main goal \eqref{eq:supW}.
\begin{theorem}
	\label{th:supW}
	For a matrix $\mPhi$ and collection of subspaces $\{\calS_\theta,~\theta\in\Theta\}$ as described above, if (C1) and (C2) hold with $\delta_1,\delta_2<1$, then
	\[
		\sup_{\theta\in\Theta} \left|\, \|\tilde\Pth\vy\|_2^2 - \|\Pth\vh_0\|_2^2 \right| ~\leq~\delta := 
		\frac{3\delta_1+2\delta_2 + (\delta_1+\delta_2)^2}{1-\delta_1}.
	\]
\end{theorem}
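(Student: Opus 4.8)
The plan is to express $\|\tilde\Pth\vy\|_2^2$ explicitly in terms of the matrices appearing in (C1) and (C2), and then bound the discrepancy from $\|\Pth\vh_0\|_2^2$ term by term. First I would write $\tilde\Pth$, the orthogonal projector onto $\Range(\mPhi\Pth)$, in closed form. Since $\mPhi\Pth$ is (with high probability, in the regime where (C1) holds) full column rank $K$, we have $\tilde\Pth = \mPhi\Pth(\Pth\mPhi^\T\mPhi\Pth)^{+}\Pth\mPhi^\T$, where $(\cdot)^+$ denotes the pseudoinverse restricted to $\Range(\Pth)$. Writing $\mA_\theta := \Pth\mPhi^\T\mPhi\Pth$, condition (C1) says $\|\Pth - \mA_\theta\|\le\delta_1$ as operators on $\Range(\Pth)$, so $\mA_\theta$ is invertible there with $\|\mA_\theta^{-1}\|\le (1-\delta_1)^{-1}$ and $\|\mA_\theta^{-1}-\Pth\|\le \delta_1/(1-\delta_1)$. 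Then
\[
	\|\tilde\Pth\vy\|_2^2 = \vy^\T\mPhi\Pth\mA_\theta^{-1}\Pth\mPhi^\T\vy
	= (\mPhi^\T\vy)^\T \Pth\mA_\theta^{-1}\Pth (\mPhi^\T\vy),
\]
and I would substitute $\vy=\mPhi\vh_0$ and split $\vh_0 = \Pth\vh_0 + \Pth^\perp\vh_0$, so that $\Pth\mPhi^\T\mPhi\vh_0 = \mA_\theta\Pth\vh_0 + \Pth\mPhi^\T\mPhi\Pth^\perp\vh_0$. Denote $\vu := \Pth\vh_0$ (with $\|\vu\|_2\le 1$) and $\vw := \Pth\mPhi^\T\mPhi\Pth^\perp\vh_0$, so $\|\vw\|_2\le\delta_2$ by (C2).

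With this notation, $\|\tilde\Pth\vy\|_2^2 = (\mA_\theta\vu + \vw)^\T \mA_\theta^{-1} (\mA_\theta\vu+\vw) = \vu^\T\mA_\theta\vu + 2\vu^\T\vw + \vw^\T\mA_\theta^{-1}\vw$. The second key step is to compare each of these three pieces to $\|\vu\|_2^2 = \|\Pth\vh_0\|_2^2$. For the first piece, $|\vu^\T\mA_\theta\vu - \|\vu\|_2^2| = |\vu^\T(\mA_\theta-\Pth)\vu| \le \delta_1\|\vu\|_2^2 \le \delta_1$. For the second, $|2\vu^\T\vw|\le 2\|\vu\|_2\|\vw\|_2\le 2\delta_2$. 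For the third, $|\vw^\T\mA_\theta^{-1}\vw|\le \|\mA_\theta^{-1}\|\,\|\vw\|_2^2 \le \delta_2^2/(1-\delta_1)$. Summing, $\bigl|\,\|\tilde\Pth\vy\|_2^2 - \|\Pth\vh_0\|_2^2\,\bigr| \le \delta_1 + 2\delta_2 + \delta_2^2/(1-\delta_1)$, and since this bound holds for every $\theta\in\Theta$ uniformly (because (C1) and (C2) are assumed uniform), I can take the supremum. This already gives something of the stated shape; getting exactly the claimed constant $\frac{3\delta_1+2\delta_2+(\delta_1+\delta_2)^2}{1-\delta_1}$ requires being slightly more careful — I suspect the authors keep cross terms together rather than applying the triangle inequality prematurely (e.g.\ bounding $\|(\mA_\theta-\Pth) + \text{something}\|$ jointly), which is why $(\delta_1+\delta_2)^2$ appears instead of $\delta_1^2 + \delta_2^2$, and why there is a $3\delta_1$ rather than $\delta_1$.

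The main obstacle is twofold. First, I need to justify that $\mPhi\Pth$ has rank exactly $K$ so that $\tilde\Pth$ has the clean pseudoinverse form above — this follows from (C1) with $\delta_1<1$, since $\mA_\theta\succeq(1-\delta_1)\Pth\succ 0$ on $\Range(\Pth)$, but it must be stated. Second, and more delicately, matching the precise constant: rather than decompose $\|\tilde\Pth\vy\|_2^2$ as I did, one cleaner route is to bound $\|\tilde\Pth\vy\|_2^2 - \|\Pth\vh_0\|_2^2$ by first noting $\tilde\Pth\mPhi\vh_0$ and $\mPhi\Pth\vh_0$ are both in $\Range(\mPhi\Pth)$, writing $\|\tilde\Pth\vy\|_2^2 = \|\mPhi\Pth\vh_0\|_2^2 + 2\langle \mPhi\Pth\vh_0, \tilde\Pth\vy - \mPhi\Pth\vh_0\rangle + \|\tilde\Pth\vy - \mPhi\Pth\vh_0\|_2^2$, controlling $\|\mPhi\Pth\vh_0\|_2^2$ via (C1), bounding $\tilde\Pth\vy - \mPhi\Pth\vh_0 = \tilde\Pth(\mPhi\Pth^\perp\vh_0)$ in norm by $\|\mPhi\Pth\vh_0\| \cdot(\text{stuff})$ using (C2) and the projector identity $\tilde\Pth\mPhi\Pth = \mPhi\Pth$, and collecting. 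I would work through whichever decomposition reproduces the stated denominator $1-\delta_1$ and numerator exactly; the arithmetic is routine once the right grouping is chosen, so I would not belabor it here beyond noting that the $(\delta_1+\delta_2)^2$ term signals that the author bounds a single squared-norm quantity whose summands carry a $\delta_1$ and a $\delta_2$ contribution.
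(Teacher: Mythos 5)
Your argument is correct and, once the routine algebra is checked, it actually proves a \emph{strictly stronger} bound than the one stated, so the theorem follows; but it takes a genuinely different route from the paper. The paper writes $\Pth=\Vth\Vth^\T$ for an orthobasis $\Vth$, expresses $\tilde\Pth = \mPhi\Vth(\mId+\Hth)\Vth^\T\mPhi^\T$ with $\|\Hth\|\le\delta_1/(1-\delta_1)$ via a Neumann series, and then bounds $\bigl|\|\Pth\mPhi^\T\vy\|_2^2-\|\Pth\vh_0\|_2^2\bigr|$ by factoring it as a difference of squares and applying the triangle inequality $\|\Pth\mPhi^\T\mPhi\vh_0\|_2\le(1+\delta_1)\|\Pth\vh_0\|_2+\delta_2$; this produces $\delta_3=(2+\delta_1+\delta_2)(\delta_1+\delta_2)$ and the final bound $(\delta_1+\delta_3)/(1-\delta_1)$, which is exactly where the $3\delta_1$ and the $(\delta_1+\delta_2)^2$ come from. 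Your decomposition $\|\tilde\Pth\vy\|_2^2=(\mA_\theta\vu+\vw)^\T\mA_\theta^{+}(\mA_\theta\vu+\vw)=\vu^\T\mA_\theta\vu+2\vu^\T\vw+\vw^\T\mA_\theta^{+}\vw$ is cleaner: the pseudoinverse cancels exactly against $\mA_\theta$ in the leading and cross terms, so none of the paper's lossy triangle-inequality steps are needed, and you land on $\delta_1+2\delta_2+\delta_2^2/(1-\delta_1)$. You should therefore not try to \emph{match} the paper's constant --- your bound is smaller, and a one-line check confirms it: multiplying your bound by $(1-\delta_1)$ gives $\delta_1+2\delta_2-\delta_1^2-2\delta_1\delta_2+\delta_2^2$, and subtracting this from the paper's numerator $3\delta_1+2\delta_2+(\delta_1+\delta_2)^2$ leaves $2\delta_1+2\delta_1^2+4\delta_1\delta_2\ge0$. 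So drop the second half of your write-up (the alternative grouping, which is essentially the paper's lossier route) and simply record this comparison; the only things to state explicitly are the facts you already flagged, namely that (C1) with $\delta_1<1$ forces $\mA_\theta\succeq(1-\delta_1)\Pth$ on $\Range(\Pth)$ (so $\mPhi\Pth$ has rank $K$, the projector formula for $\tilde\Pth$ is valid, $\mA_\theta\mA_\theta^{+}=\Pth$, and $\|\mA_\theta^{+}\|\le(1-\delta_1)^{-1}$), and that $\vw=\Pth\mPhi^\T\mPhi\Pth^\perp\vh_0$ lies in $\Range(\Pth)$ so the cross term collapses to $\vu^\T\vw$ with $\|\vw\|_2\le\delta_2$ by (C2).
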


The upper bounds in (C1) and (C2) depend on the {\em geometrical structure} of the collection of subspaces $\{\calS_\theta\}$.  This structure is quantified through how well $\{\calS_\theta\}$ can be approximated at different resolutions using a finite subset of the subspaces.  This approximation is taken using the natural distance between two subspaces, the  operator norm of the difference of their projectors:
\[
	d(\calS_\tha,\calS_\thb) \triangleq \|\Ptha-\Pthb\|.
\]
Notice that it is always true that $0\leq d(\calS_\tha,\calS_\thb)\leq 1$.  A finite subset $\{\calT_q := \calS_{\theta_q},~~\{\theta_1,\ldots,\theta_Q\}\subset\Theta\}$ is an $\epsilon$-approximation to $\{\calS_\theta\}$ if everything in $\{\calS_\theta\}$ is within distance $\epsilon$ of a corresponding element in $\{\calT_q\}$,
\[
	\sup_{\theta\in\Theta}\min_{1\leq q\leq Q} d(\calS_\theta,\calT_q)
	~\leq~ \epsilon.
\]
The covering number of $\{\calS_\theta\}$ at resolution $\epsilon$ it the size of the smallest such $\epsilon$-approximation
\[
	N(\{\calS_\theta\},\epsilon) = 
	\min\left\{\operatorname{card}(\{\calT_q\}) ~:~ 
	\sup_{\theta\in\Theta}\min_{q} d(\calS_\theta,\calT_q)
	~\leq~ \epsilon\right\}.
\]
Of course, as $\epsilon\rightarrow 0$, the covering number $N(\{\calS_\theta\},\epsilon)$ increases.  Our assumption about the collection of subspaces $\{\calS_\theta\}$ is that this covering number grows at most polynomially as $\epsilon$ decreases.  That is, there exist constants $N_0$ and $\alpha$ such that
\begin{equation}
	\label{eq:georeg}
	N(\{\calS_\theta\},\epsilon)~\leq~ N_0\,\epsilon^{-\alpha},
	\quad\text{for all}\quad 0\leq\epsilon\leq 1.
\end{equation}
We call such collections of subspaces {\em geometrically regular}.  As it turns out, this assumption is met for many and perhaps even most cases of practical interest.  In Section~\ref{sec:geometric-regularity}, we compute bounds on $(N_0,\alpha)$ in a few cases where $\Theta$ is a subset of the real line that indexes the position (shift) of $K$ orthogonal basis functions.  In each of these cases, we see that $N_0\leq\operatorname{poly}(K)\cdot|\Theta|$ and $\alpha=1$ or $2$.  In general, we will treat $\alpha$ like a small constant, with $N_0$ capturing the most important parts of the geometrical complexity of the set $\{\calS_\theta\}$.

In the bounds below for (C1) and (C2), $(N_0,\alpha)$ are combined into a single constant $\Delta$ which has the form
\begin{equation}
	\label{eq:Delta}
	\Delta = \log(8^\alpha N_0^2) + 2 ~\approx~ 2.08\,\alpha+ 2\log N_0 + 2.
\end{equation}
Roughly speaking, this quantity approximates (the logarithm of) the number of subspaces over which the processes $\|\Pth - \Pth\mPhi^\T\mPhi\Pth\|$ and $\|\Pth\mPhi^\T\mPhi\Pth^\perp\vh_0\|_2$ are significantly different.

Our next main result relates gives us a means to control the upper bound $\delta_1$ in (C1) in terms of these geometrical descriptors of $\{\calS_\theta\}$.  In both Theorems~\ref{th:ProjectEmbed} and \ref{th:suphperp}, we assume that $M\geq K$, which is a necessary condition to stably embed even a single subspace of dimension $K$.
\begin{theorem}
	\label{th:ProjectEmbed}
	Let $\mPhi$ be an $M\times N$ matrix with independent Gaussian entries as in \eqref{eq:PhiGauss}, and let $\{\calS_\theta\}$ be a collection of subspaces that are geometrically regular in the sense of \eqref{eq:georeg} with $\Delta$ defined as in \eqref{eq:Delta}.  Then there exist constants $C_1$ and $C_2$ such that for all $t\geq 0$ 
	\begin{equation}
		\label{eq:ProbEmbedFail}
		\P{\sup_{\theta\in\Theta}\|\Pth - \Pth\mPhi^\T\mPhi\Pth\| > C_1\,\gamma(t)}
		~\leq~
		C_2K\,\e^{-t+\Delta},
	\end{equation}
	where
	\begin{equation}
		\label{eq:gammat}
		\gamma(t) = \sqrt{\frac{Kt}{M}} + \frac{Kt}{M} + \frac{t}{\sqrt{M}} + 
		\frac{t^2}{M} + \sqrt{\frac{K}{M}} + \frac{\alpha^2}{M}.
	\end{equation}
\end{theorem}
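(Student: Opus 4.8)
The plan is to establish \eqref{eq:ProbEmbedFail} by a chaining argument over the collection $\{\calS_\theta\}$ with respect to its natural metric $d(\calS_{\theta_1},\calS_{\theta_2}) = \|\Ptha - \Pthb\|$, using the polynomial covering bound \eqref{eq:georeg} to count the subspaces that must be handled at each resolution. Write $\mW := \mPhi^\T\mPhi - \mId$; under \eqref{eq:PhiGauss} this is centered, $\E\mW = \mzero$, and the target quantity is $\sup_{\theta\in\Theta}\|\Pth\mW\Pth\|$. For each $j = 0,1,2,\ldots$ fix a $2^{-j}$-approximation $\calT^{(j)}$ of $\{\calS_\theta\}$ with $\operatorname{card}(\calT^{(j)})\leq N_0\,2^{j\alpha}$, and for each $\theta$ let $\pi_j(\theta)$ index a nearest element, so $d(\calS_\theta,\calS_{\pi_j(\theta)})\leq 2^{-j}$ and $d(\calS_{\pi_j(\theta)},\calS_{\pi_{j+1}(\theta)})\leq 3\cdot 2^{-j-1}$. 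Since $\|\mP_{\pi_j(\theta)} - \Pth\|\leq 2^{-j}\to 0$ and $\|\mW\|<\infty$ almost surely, $\mP_{\pi_j(\theta)}\mW\mP_{\pi_j(\theta)}\to\Pth\mW\Pth$ in operator norm, so I can telescope along the chain:
\[
	\Pth\mW\Pth = \mP_{\pi_0(\theta)}\mW\mP_{\pi_0(\theta)} + \sum_{j\geq 0}\left(\mP_{\pi_{j+1}(\theta)}\mW\mP_{\pi_{j+1}(\theta)} - \mP_{\pi_j(\theta)}\mW\mP_{\pi_j(\theta)}\right).
\]
Hence $\sup_\theta\|\Pth\mW\Pth\|$ is at most the supremum of the head term over the $\leq N_0$ coarsest subspaces plus $\sum_{j\geq0}$ of the supremum of the $j$-th increment over the $\leq N_0^2\,2^{(2j+1)\alpha}$ distinct consecutive-link pairs.

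Both pieces reduce to a deviation bound for $\mW$ compressed to a low-dimensional subspace. For a fixed $k$-dimensional $\calU\subseteq\R^N$, $\|\mP_\calU\mW\mP_\calU\| = \sup_{\vu\in\calU,\,\|\vu\|_2 = 1}\bigl|\,\|\mPhi\vu\|_2^2 - 1\,\bigr|$, and since $M\|\mPhi\vu\|_2^2\sim\chi^2_M$ for each fixed unit $\vu$, I combine a $\tfrac14$-net of the unit sphere of $\calU$ (cardinality $\leq 9^{k}$) with the sub-exponential $\chi^2$ tail and the standard net-to-operator-norm comparison to obtain, for all $s\geq0$,
\[
	\P{\|\mP_\calU\mW\mP_\calU\| > C\left(\sqrt{\tfrac{k+s}{M}} + \tfrac{k+s}{M}\right)} ~\leq~ \e^{-s},
\]
and I keep the sharper Davidson--Szarek form here (leading term $\sqrt{k/M}$ rather than $\sqrt{2k/M}$-type) since that is what ultimately pins the leading $\sqrt{K/M}$ in \eqref{eq:gammat}. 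I apply this with $k=K$ on $\calS_{\pi_0(\theta)}$ for the head term, and with $\calU_j := \calS_{\pi_j(\theta)} + \calS_{\pi_{j+1}(\theta)}$ (of dimension $\leq 2K$, since the two links are at distance $<1$) for the increments. For the increments I also use, for projectors $\mP_1 := \mP_{\pi_{j+1}(\theta)}$, $\mP_2 := \mP_{\pi_j(\theta)}$, the identity $\mP_1\mW\mP_1 - \mP_2\mW\mP_2 = \mP_1\mW(\mP_1-\mP_2) + (\mP_1-\mP_2)\mW\mP_2$ and the fact that $\mP_1$, $\mP_2$, $\mP_1-\mP_2$ all act within $\calU_j$, to get
\[
	\left\|\mP_1\mW\mP_1 - \mP_2\mW\mP_2\right\| ~\leq~ 2\,\|\mP_1-\mP_2\|\cdot\|\mP_{\calU_j}\mW\mP_{\calU_j}\| ~\leq~ 3\cdot 2^{-j}\,\|\mP_{\calU_j}\mW\mP_{\calU_j}\|.
\]

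To assemble the bound I union-bound the single-block estimate at level $j$ over the $\leq N_0^2\,2^{(2j+1)\alpha}$ pairs (each carrying a $9^{2K}$ sphere-net), choosing the slack $s_j$ proportional to $t + K + (2j+1)\alpha + 2\log N_0 + \log(j+1)$; the $\log(j+1)$ makes the per-level failure probabilities summable, and the $j$-independent part of the exponent reproduces $\Delta = \log(8^\alpha N_0^2)+2$ (the surplus powers of $2^\alpha$ being bookkeeping overhead), so that the total failure probability — head term plus all increments — is at most $C_2 K\,\e^{-t+\Delta}$, the prefactor $K$ absorbed as slack from the single-block bound. On the complementary event, $\|\mP_{\calU_j}\mW\mP_{\calU_j}\|\lesssim\sqrt{(K+s_j)/M}+(K+s_j)/M$; multiplying by $3\cdot 2^{-j}$, summing over $j\geq0$ (the geometric factor $\sum_j 2^{-j}=2$ collapses the $j$-linear pieces into constants and produces the $\alpha$-dependence), and adding the head term $\lesssim\sqrt{(K+s_0)/M}+(K+s_0)/M$ yields $C_1\gamma(t)$: the leading $\sqrt{K/M}$ is the base estimate, $\sqrt{Kt/M}+Kt/M$ carries the slack $t$, $t/\sqrt{M}+t^2/M$ the sub-exponential corrections, and $\alpha^2/M$ the accumulated scale-dependent combinatorial slack. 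Replacing the sharper intermediate expressions by the slightly looser ones above (legitimate since one may take $t\geq1$, so e.g.\ $\sqrt{t/M}\leq t/\sqrt M$ and $\alpha/M\leq\alpha^2/M$) gives exactly \eqref{eq:gammat}.

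The main obstacle is the chaining bookkeeping in this last step: the number of consecutive-link pairs at resolution $2^{-j}$ grows like $2^{2j\alpha}$, so the slack $s_j$ needed for the union bound grows linearly in $j$, and one must check that this growth is overwhelmed by the geometric scale factor $2^{-j}$ coming from the Lipschitz increment bound, and that all of the scale-dependence collapses into the single exponent $\Delta$ without inflating $\gamma(t)$ beyond \eqref{eq:gammat}. Subsidiary care is needed to keep the single-subspace random-matrix estimate sharp enough that the leading term stays at $\sqrt{K/M}$, and to justify the limiting telescoping identity (which rests only on norm convergence of the projectors and almost-sure finiteness of $\|\mW\|$).
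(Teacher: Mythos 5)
Your argument is sound and reaches the stated bound, but it takes a genuinely different route from the paper at the key step, namely the control of the chaining increments. The paper first proves a standalone increment tail bound (its Lemma~\ref{lm:ProjectIncrement}): it writes $\mG_{\theta_1}-\mG_{\theta_2}$ as a sum over the rows $\vphi_m$ of $\mPhi$ of independent, zero-mean, low-rank random matrices, conditions on the event that the norms $\|\mP_{\theta_1,\theta_2}\vphi_m\|_2$ are all controlled, and applies the matrix Bernstein inequality; the conditioning is what produces both the $t^2/M$ term in $\gamma(t)$ and the dimensional prefactor $K$ in the failure probability. It then feeds this increment bound into a separately stated chaining proposition (Proposition~\ref{prop:chaining}), whose Lemma~\ref{lm:gsum} does exactly the $\sum_j 2^{-j}\gamma(u+pj+q)$ bookkeeping you carry out by hand, with the $j$-independent slack $p+q$ becoming $\Delta$ and the $cu_j^2$ term producing the $\alpha^2/M$ contribution. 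You instead bound each increment \emph{deterministically} by $2\|\mP_1-\mP_2\|\cdot\|\mP_{\calU_j}\mW\mP_{\calU_j}\|$ via the factorization $\mP_1\mW\mP_1-\mP_2\mW\mP_2=\mP_1\mW(\mP_1-\mP_2)+(\mP_1-\mP_2)\mW\mP_2$ (valid since $\mP_1$, $\mP_2$, and their difference all act within the $2K$-dimensional sum $\calU_j$), reducing everything to the classical concentration of $\|\mP_\calU\mW\mP_\calU\|$ for a fixed low-dimensional subspace. This is arguably cleaner: your increment tail is purely sub-Gaussian-plus-sub-exponential in the slack (no $t^2$ term and no dimension prefactor, both of which you then reinsert harmlessly since $\gamma(t)$ in \eqref{eq:gammat} only needs to be an upper bound), whereas the paper's Bernstein route yields a genuinely heavier tail that forces the $cu^2$ machinery in its chaining lemma. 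The price you pay is a constant $\sqrt{K/M}$ ``bias'' in each increment that does not vanish with the confidence level, but as you note this is absorbed by the geometric sum $\sum_j 2^{-j}$ into the standalone $\sqrt{K/M}$ term of \eqref{eq:gammat}. Two minor points of hygiene: the absorption of the residual $\sqrt{(\alpha+\log N_0)/M}$-type terms into $\gamma(t)$ should be justified by observing that the claim is vacuous unless $t\gtrsim\Delta$ (since $\Delta\geq 2$, the right-hand side of \eqref{eq:ProbEmbedFail} exceeds $1$ for small $t$), which is the same dispensation the paper grants itself; and the parenthetical ``since the two links are at distance $<1$'' is unnecessary --- $\dim(\calS_{\pi_j(\theta)}+\calS_{\pi_{j+1}(\theta)})\leq 2K$ holds unconditionally.
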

To make the probability bound in \eqref{eq:ProbEmbedFail} meaningful, we will need to take $t\gtrsim \Delta+\log K$. It follows that, with high probability, we have the following upper bound on $\delta_1$:
\[
	\delta_1 ~\lesssim~ \left( \frac{\max\left(K(\Delta+\log K), (\Delta+\log K)^2\right)}{M} \right)^{1/2}.
\]
This bound will be meaningful when we have:
\[
	M~\gtrsim~ \max\left(K(\Delta+\log K), (\Delta+\log K)^2\right).
\]
As we will see in the examples in Section~\ref{sec:geometric-regularity}, it is typical that $\Delta$ scales like $\log K$ as well.

This $\delta_1$ term also gives an immediate bound on the embedding all signals from these subspaces with respect to the random projection $\mPhi$. To wit, for any $\delta_1$ satisfying (C1), we have:
\begin{equation}
\sup_{\theta \in \Theta} \sup_{f \in \calS_{\theta}} \left| \frac{\|\mPhi f\|^2 }{\| f \|^2} - 1 \right| 
 =  \sup_{\theta \in \Theta} \sup_{\|f\| = 1} \| f^T \Pth f - f^T \Pth\mPhi^\T\mPhi\Pth f \| = 
\sup_{\theta \in \Theta} \|\Pth - \Pth\mPhi^\T\mPhi\Pth\| \leq \delta_1,
\end{equation}
mirroring the classic expression for the restricted isometry property of sparse vectors.

Our final main result is a tail bound of a similar form for the upper bound $\delta_2$ in (C2).
\begin{theorem}
	\label{th:suphperp}
	With $\mPhi$ and $\{\calS_\theta\}$ as in Theorem~\ref{th:ProjectEmbed}, there exist constants $C_1$  and $C_2$ such that for any fixed $\vh_0\in\R^N$ with $\|\vh_0\|_2=1$,
	\[
		\P{\sup_{\theta\in\Theta} \|\Pth\mPhi^\T\mPhi\Pth^\perp\vh_0\| > C_1\,\gamma(t)}
		~\leq~
		C_2K\e^{-t+\Delta}
	\]
	where 
	\[
		\gamma(t) = \sqrt{\frac{tK}{M}} + \frac{t\sqrt{K}}{M} + \sqrt{\frac{K}{M}}.
	\]
\end{theorem}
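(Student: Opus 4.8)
The plan is to follow the chaining strategy behind Theorem~\ref{th:ProjectEmbed}, but to exploit a simplification special to the cross term. Fix for each $\theta$ an $N\times K$ matrix $\mU_\theta$ with orthonormal columns spanning $\calS_\theta$, so $\Pth=\mU_\theta\mU_\theta^\T$. Since $\Pth^\perp\vh_0$ lies in $\calS_\theta^\perp$, it is orthogonal to every column of $\mU_\theta$; as $\mPhi$ is Gaussian, $\mPhi\mU_\theta$ and $\mPhi\Pth^\perp\vh_0$ are jointly Gaussian and uncorrelated, hence independent. Writing $W(\theta):=\|\Pth\mPhi^\T\mPhi\Pth^\perp\vh_0\|=\|(\mPhi\mU_\theta)^\T\mPhi\Pth^\perp\vh_0\|$, this means that conditional on $\mPhi\mU_\theta$ the quantity $W(\theta)$ is the Euclidean norm of a Gaussian vector, rather than a genuine second-order chaos in $\mPhi$ as in Theorem~\ref{th:ProjectEmbed}; this is precisely why the $\gamma(t)$ here is free of the $Kt/M$, $t^2/M$, and $\alpha^2/M$ terms.

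I would bound $\sup_\theta W(\theta)$ through the identity
\[
\Pth\mPhi^\T\mPhi\Pth^\perp\vh_0 \;=\; \Pth(\mPhi^\T\mPhi-\mId)\vh_0 \;-\; \Pth(\mPhi^\T\mPhi-\mId)\Pth\vh_0 .
\]
For the first term only $\Pth$ varies while $\vh_0$ is fixed: writing $\vw:=(\mPhi^\T\mPhi-\mId)\vh_0=(\|\mPhi\vh_0\|^2-1)\vh_0+\vw'$ with $\vw'$ the component of $\vw$ orthogonal to $\vh_0$, one checks (using that $\mPhi(\mId-\vh_0\vh_0^\T)$ is independent of $\mPhi\vh_0$) that conditionally on $\mPhi\vh_0$ the vector $\vw'$ is Gaussian with covariance $M^{-1}\|\mPhi\vh_0\|^2(\mId-\vh_0\vh_0^\T)$. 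Hence $\sup_\theta\|\Pth\vw'\|$ is, up to the factor $\|\mPhi\vh_0\|/\sqrt M$ (which concentrates near $1/\sqrt M$), a Gaussian process supremum over the union of unit spheres $\mathcal X:=\bigcup_\theta\{\vu\in\calS_\theta:\|\vu\|=1\}$; Dudley's inequality with the entropy estimate $N(\mathcal X,\rho)\lesssim N(\{\calS_\theta\},c\rho)\,(c'/\rho)^{K}$, together with Gaussian concentration and the concentration of $\|\mPhi\vh_0\|^2$ near $1$, gives a bound of order $\sqrt{K/M}+\sqrt{(\Delta+t)/M}+\sqrt{t/M}+t/M$. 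For the second term I would argue pointwise by conditioning on $\mPhi\Pth\vh_0$, which reduces $\Pth(\mPhi^\T\mPhi-\mId)\Pth\vh_0$ to a chi-square fluctuation plus the norm of a Gaussian vector in $\calS_\theta$ — strictly smaller tails than the operator norm bounded in Theorem~\ref{th:ProjectEmbed}, because we hit the single vector $\Pth\vh_0$ rather than the worst direction — and then chain over $(\Theta,d)$ using $N(\{\calS_\theta\},\epsilon)\le N_0\epsilon^{-\alpha}$; the entropy produces a convergent geometric sum of the type $\sum_j 2^{-j}\sqrt{\log N_0+\alpha j+t}$, independent of $N$, so this term too is of order $\sqrt{(\Delta+t)/M}+(\Delta+t)/M+\sqrt{K/M}$. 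Adding the two, and using that in the meaningful regime $t\gtrsim\Delta+\log K$ one has $\sqrt{(\Delta+t)/M}\le\sqrt{tK/M}$ and $(\Delta+t)/M\le t\sqrt K/M$, gives $\sup_\theta W(\theta)\lesssim\gamma(t)$, with failure probability $C_2K\e^{-t+\Delta}$ inherited from the chaining bookkeeping as in Theorem~\ref{th:ProjectEmbed}.

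The step I expect to be the main obstacle is the chaining for the second term, and specifically keeping every increment estimate free of the factor $\sigma_{\max}(\mPhi)\sim\sqrt{N/M}$. With $\mDelta=\Ptha-\Pthb$, a chain increment of $\Pth(\mPhi^\T\mPhi-\mId)\Pth\vh_0$ equals $\mDelta(\mPhi^\T\mPhi-\mId)\Ptha\vh_0+\Pthb(\mPhi^\T\mPhi-\mId)\mDelta\vh_0$; each piece carries an explicit factor $\|\mDelta\|=d(\calS_{\theta_1},\calS_{\theta_2})$ but has $\mPhi^\T\mPhi$ sandwiched, through $\mDelta$ and the projectors, between subspaces of dimension at most $2K$, namely $\calS_{\theta_1}+\calS_{\theta_2}$. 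Controlling these therefore requires a uniform bound on the top singular value of $\mPhi\mV$ over orthonormal bases $\mV$ of the $2K$-dimensional pairwise sums $\calS_{\theta_1}+\calS_{\theta_2}$. This family is still polynomially covered, so the same machinery that proves Theorem~\ref{th:ProjectEmbed}, applied to it, yields $\sigma_{\max}(\mPhi\mV)\lesssim 1$ uniformly once $M\gtrsim K$; the delicate point is bounding the covering numbers of these pairwise sums, which — unlike sums of two net elements — are genuinely $2K$-dimensional and hence are not directly controlled by the square of $N(\{\calS_\theta\},\epsilon)$. Granted this bound, each increment has sub-Gaussian and sub-exponential tails at scale proportional to $d(\calS_{\theta_1},\calS_{\theta_2})$, and the chaining closes.
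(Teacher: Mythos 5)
Your decomposition $\Pth\mPhi^\T\mPhi\Pth^\perp\vh_0 = \Pth(\mPhi^\T\mPhi-\mId)\vh_0 - \Pth(\mPhi^\T\mPhi-\mId)\Pth\vh_0$ is algebraically correct, and the treatment of the anchor/first term (independence of $\mPhi\mU_\theta$ and $\mPhi\Pth^\perp\vh_0$, conditional Gaussianity of $\vw'$, Dudley over the union of unit spheres) is sound. The genuine gap is that you leave the second term conditional on a claim you flag as the main obstacle --- a uniform bound on $\sigma_{\max}(\mPhi\mV)$ over orthonormal bases of the pairwise sums $\calS_{\theta_1}+\calS_{\theta_2}$, together with a covering-number estimate for that family --- and that claim is both unproven in your write-up and, more importantly, unnecessary. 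The chaining machinery (Proposition~\ref{prop:chaining}) does not need a uniform good event on $\mPhi$; it needs, for each \emph{fixed} pair $(\theta_1,\theta_2)$, a tail bound on the increment proportional to $\|\Ptha-\Pthb\|$, and the union over pairs is then taken only over the polynomially many net pairs at each scale, which is exactly what produces the $\e^{\Delta}$ factor in the failure probability. Such a per-pair bound follows directly: writing the increment as $\sum_{m=1}^M \vx_m$ with each $\vx_m$ an independent quadratic expression in $\vphi_m$, every summand carries an explicit factor of $\mD=\Ptha-\Pthb$, and every occurrence of $\vphi_m$ is either projected onto a subspace of dimension at most $2K$ (through $\mD$ or $\mS=\Ptha+\Pthb$) or paired with a fixed unit-norm vector. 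Hence $\|\vx_m\|_{\psi_1}\lesssim \sqrt{K}\,\|\mD\|/M$ and $\sum_m\E[\|\vx_m\|_2^2]\lesssim K\|\mD\|^2/M$, and the Bernstein inequality for sums of independent vectors (Proposition~\ref{prop:matrixbernstein} with $K_1=2K$, $K_2=1$) gives a tail of the form $C\|\mD\|\bigl(\sqrt{tK/M}+t\sqrt{K}/M\bigr)$ with failure probability $4K\e^{-t}$. The quantity $\sigma_{\max}(\mPhi)\sim\sqrt{N/M}$ never enters, so the difficulty you anticipate dissolves; but as written your proof does not close.

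For comparison, the paper's route avoids your two-term split entirely. It anchors at a fixed $\theta_0$, controls $\|\mP_{\theta_0}\mPhi^\T\mPhi\mP_{\theta_0}^\perp\vh_0\|_2$ by the same independence argument you describe, and chains the original process using the symmetrized increment identity $\Ptha\vphi\vphi^\T\Ptha^\perp\vh_0-\Pthb\vphi\vphi^\T\Pthb^\perp\vh_0 = \tfrac12\bigl(\mD\vphi\vphi^\T\mT\vh_0-\mS\vphi\vphi^\T\mD\vh_0\bigr)$ with $\mT=\Ptha^\perp+\Pthb^\perp$, so that both pieces visibly carry $\mD$ and the $\mT$ factor is harmless because it only ever acts on the fixed vector $\vh_0$ (giving $\|\mT\vh_0\|_2\le 2$). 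This is the same mechanism that should rescue your second term, applied once to the whole process rather than to a piece of it; your Dudley argument for the first term is correct but buys nothing that the single Bernstein-plus-chaining pass does not already deliver.
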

As with Theorem~\ref{th:ProjectEmbed}, we can make the probability above small by taking $t\gtrsim\Delta+\log K$, meaning that $\sup_{\theta\in\Theta} \|\Pth\mPhi^\T\mPhi\Pth^\perp\vh_0\|\leq\delta<1$ for $M\gtrsim \delta^{-2}K(\Delta+\log K)$.

\section{Collections of subspaces with geometric regularity}
\label{sec:geometric-regularity}


In this section, we will discuss some specific examples of a parameterized collection of subspaces $\{\calS_{\theta}\}$ that are geometrically regular in that there exist constants $N_0,\alpha$ such that
\begin{equation*}
	N( \{\calS_{\theta}\} ,\epsilon, d) \leq N_0 \epsilon^{-\alpha},
	\quad d(\calS_{\theta_1},\calS_{\theta_2}) = \|\Ptha-\Pthb\| \quad \forall \epsilon \leq 1.
\end{equation*}
In each of the examples, mapping from $\theta\in\R^D$ to the $K$-dimensional subspace $\calS_\theta$ is Holder-continuous.  For $D=1$, this means that
\begin{equation}
	\label{eq:holder}
	\|\mP_{\theta_1} - \mP_{\theta_2} \| \leq \beta|\theta_1 - \theta_2|^{\rho},
\end{equation}
for some constants $\beta$ and $\rho$.  This allows us to directly tie the covering numbers of the set of subspaces $\{\calS_\theta\}$ to the covering numbers of the parameter class $\Theta$ under the standard Euclidean metic as
\begin{align}
	\label{eq:NSNT}
	\|\mP_{\theta_1} - \mP_{\theta_2} \| \leq \beta |\theta_1 - \theta_2|^{\rho}
	& ~\Rightarrow~ 
	N(\{ \calS_{\theta}\},\epsilon,d) \leq N(\Theta,(\epsilon/\beta)^{1/\rho},|\cdot|). 
\end{align}
The sets $\Theta$ we consider are simple: in two of the three examples below, they are just intervals on the real line.  In this case, we have the easy bound
\[
	N([a,b], \epsilon, |\cdot|)\leq \frac{|b-a|}{2\epsilon}+1 \leq \frac{|b-a|}{\epsilon},\quad \forall \epsilon \leq (1 / \beta)^{1/\rho},
\]
provided that $|b-a| \geq (\epsilon/\rho)^{1 / \rho}$. So if, for example, we establish \eqref{eq:holder} and $\Theta = [a,b]$, we can take $\alpha = 1/\rho$ and $N_0 = \beta^{1/\rho}|b-a|$.

%

All of our examples involve collections of subspaces index by a shift parameter or (equivalently) a modulation parameter.  The general idea is that the signal lies in the span (or can be closely approximated by) a superposition of $K$ basis functions that are localized in time; the parameter $\theta$ indicates the location of the basis functions.  More precisely, for the shift parameter spaces, we are interested in signals that can be written
\[
	h(t) = \sum_{k=1}^K h[k] v_k(t-\theta),
	\quad\text{for some $\vh\in\R^K$ and $\theta\in[t_1,t_2]$,}
\]
where $v_1(t),\ldots,v_K(t)\in L_2(\R)$ are linearly independent, continuous-time signals.  The collection of all such signals form a manifold in $L_2(\R)$.  Our model will be that we observe this signal (through $\Phi$) after it has been projected onto a fixed $N$ dimensional ambient space $\calH$.  The subspaces $\calS_{\theta}\subset\calH$ are then the span of the $v_k(t-\theta)$ projected onto $\calH$.  Note that we are {\bf not} discretizing the parameter $\theta$; we are simply articulating the signal model as a subset of $\R^N$ rather than $L_2(\R)$.  The exact nature of this discretization is not that critical for our purposes here; we simply introduce it to avoid problematic topological issues that are ultimately irrelevant to our understanding.  The ambient dimension $N$ does not appear anywhere in our results, and so this discretization can be arbitrarily fine-grained. 
We will perform our calculations below based on the $L_2(\R)$ distance between continuous-time signals, and then use the simple fact that $\|\mP_{\calH}\left[u(t)-w(t)\right]\|_2\leq\|u(t)-w(t)\|_{L_2(\R)}$.

In each of our examples below, the $\{v_k(t-\theta)\}$ are a well-defined orthobasis for each subspace in the collection (before it is projected into $\calH$).  We bound the distance between the subspaces parameterized by $\theta_1$ and $\theta_2$ by looking at the distances between these basis functions at different shifts.  If we set $u_1(t) = v_1(t-\theta_1),\ldots,u_{K}(t)=v_K(t-\theta_1)$  and $w_1(t) = v_1(t-\theta_2),\ldots,w_K(t) = v_K(t-\theta_2)$, then we have
\begin{equation}
	\label{eq:Pvbound}
	\|\Ptha-\Pthb\| \leq 2\sup_{\|x(t)\|_2=1}\left(\sum_{k=1}^K|\<x(t),u_k(t)-w_k(t)\>|^2\right)^{1/2}
	\leq 2\left(\sum_{k=1}^K\|u_k(t)-w_k(t)\|_2^2\right)^{1/2}.
\end{equation}


The examples here are meant to be illustrative more than precise.  The geometrical constant $\Delta$ in Theorems~\ref{th:ProjectEmbed} and \ref{th:suphperp} scale like $\Delta\sim \alpha + \log N_0$, and so our main concern will be establishing $\alpha$ as a small constant, and $N_0$ a polynomial of $K$ of modest order.


\subsection{Pulse at an unknown delay}
\label{sec:pulsedelay}

In our first example, we consider the problem of estimating the shift (or ``time of arrival'') of a single pulse with a known shape.  In this case, solving the direct observation problem \eqref{eq:sm} is equivalent to the classical {\em matched filter}; the indirect observation problem in \eqref{eq:csm} has been called the {\em smashed filter} in the literature \cite{davenport07sm,davenport10si} and has been previously analyzed for specialized $\mPhi$ \cite{eftekhari13ma}.

We start by considering a Gaussian pulse,
\[
	v(t) = \pi^{-1/4}\sigma^{-1/2}e^{-t^2/2\sigma^2},
\]
where $\sigma$ is a fixed width parameter and the normalization factor in front ensures $\|v(t)\|_2=1$.  The collection of $K=1$ dimensional subspaces we want to test are generated by the manifold of unit vectors $\{v(t-\theta),~\theta\in[0,T]\}$ projected onto the discretized space $\calH$.  We have
\begin{align}
	\label{eq:gaussautocorr}
	\|v(t-\theta_1)-v(t-\theta_2)\|_2^2 & = 
	\|v(t-\theta_1)\|_2^2 + \|v(t-\theta_2)\|_2^2 - 2\<v(t-\theta_1),v(t-\theta_2)\> \\
	\nonumber
	&= 2\left(1 - e^{-(\theta_2-\theta_1)^2/4\sigma^2}\right) \\
	\nonumber
	&\leq \frac{|\theta_2-\theta_1|^2}{2\sigma^2},
\end{align}
and so applying \eqref{eq:Pvbound},
\begin{align*}
	\|\Ptha-\Pthb\| &\leq \frac{\sqrt{2}}{\sigma}|\theta_2-\theta_1|,
\end{align*}
which combined with \eqref{eq:NSNT} means
\[
	N(\{\calS_\theta\},\epsilon,d) ~\leq~ \frac{\sqrt{2}\,T}{\sigma}\epsilon^{-1}. 
\]
In this case, we can use $\alpha=1$ and $N_0= \sqrt{2}T/\sigma$, and take
\[
	\Delta\leq  \log(T/\sigma) + 4.78,
\]
as the geometrical constant in Theorems~\ref{th:ProjectEmbed} and \ref{th:suphperp}.

As the calculations in \eqref{eq:gaussautocorr} suggest, the distance between two close shifts of the same function is related to the flatness of its autocorrelation function near zero.  For smooth $v(t)$, the following lemma, proved in the appendix, quantifies this behavior in a convenient manner.
\begin{lemma}
	\label{lm:sobolev_shift}
	Let $v(t)$ be a unit-energy signal, $\|v(t)\|_{L_2}=1$, with Fourier transform $\hat{v}(\xi)$.  If $v(t)$ has bounded Sobolev norm,
	\[
		\left(\frac{1}{2\pi}\int_{-\infty}^\infty \xi^2|\hat{v}(\xi)|^2 ~d\xi\right)^{1/2} ~\leq~L,
	\]
	then
	\[
		\|v(t-\theta_1) - v(t-\theta_2)\|_2 ~\leq~ L|\theta_1-\theta_2|.
	\]
\end{lemma}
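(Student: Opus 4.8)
The plan is to work on the Fourier side, where the shift operator becomes multiplication by a unimodular phase. By Plancherel's theorem,
\[
	\|v(t-\theta_1) - v(t-\theta_2)\|_2^2
	= \frac{1}{2\pi}\int_{-\infty}^\infty \left| e^{-\i\xi\theta_1} - e^{-\i\xi\theta_2}\right|^2 |\hat{v}(\xi)|^2 ~d\xi.
\]
First I would simplify the integrand: factoring out $e^{-\i\xi\theta_1}$ gives $|e^{-\i\xi\theta_1} - e^{-\i\xi\theta_2}|^2 = |1 - e^{-\i\xi(\theta_2-\theta_1)}|^2 = 2(1-\cos(\xi(\theta_2-\theta_1))) = 4\sin^2\!\big(\xi(\theta_2-\theta_1)/2\big)$. (I will just write $\i$ in the writeup where the paper uses its own imaginary-unit convention.)

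Second, I would apply the elementary bound $|\sin(x)| \leq |x|$, valid for all real $x$, with $x = \xi(\theta_2-\theta_1)/2$. This yields $4\sin^2(\xi(\theta_2-\theta_1)/2) \leq 4\cdot \xi^2(\theta_2-\theta_1)^2/4 = \xi^2(\theta_1-\theta_2)^2$. Substituting back,
\[
	\|v(t-\theta_1) - v(t-\theta_2)\|_2^2
	\leq |\theta_1-\theta_2|^2 \cdot \frac{1}{2\pi}\int_{-\infty}^\infty \xi^2 |\hat{v}(\xi)|^2 ~d\xi
	\leq |\theta_1-\theta_2|^2 \cdot L^2,
\]
using the Sobolev-norm hypothesis. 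Taking square roots gives the claimed inequality $\|v(t-\theta_1) - v(t-\theta_2)\|_2 \leq L|\theta_1-\theta_2|$.

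There is essentially no hard step here — the argument is a two-line computation once one passes to the frequency domain. The only points requiring a modicum of care are the normalization constant in the Plancherel identity (matching the $1/2\pi$ convention already fixed by the statement of the lemma, under which the hypothesis is written), and confirming that the Fourier transform of a shift $v(t-\theta)$ is $e^{-\i\xi\theta}\hat{v}(\xi)$ in that same convention. I would also remark that the bound is tight in the sense that it recovers the small-shift asymptotics of the general autocorrelation expansion, consistent with the Gaussian computation in \eqref{eq:gaussautocorr} where $L = 1/(\sqrt{2}\,\sigma)$ for the unit-energy Gaussian of width $\sigma$. If one wanted the Gaussian constant to come out as $\sqrt{2}/\sigma$ as in the text rather than $1/(\sqrt2\sigma)$, that discrepancy is just the factor-of-2 in \eqref{eq:Pvbound}, not in the lemma itself.
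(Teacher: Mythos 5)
Your argument is exactly the paper's: pass to the frequency domain via Plancherel, reduce the phase difference to $4\sin^2(\xi\theta_d/2)$, and bound it by $\xi^2\theta_d^2$ using $|\sin x|\leq|x|$. The closing remark reconciling the Gaussian constant $\sqrt{2}/\sigma$ with the factor of $2$ from \eqref{eq:Pvbound} is also accurate.
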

So if the collection of subspaces $\{\calS_\theta\}$ consists of all shifts $\{v(t-\theta),~\theta\in[0,T]\}$ of a Sobolev differentiable function (projected onto $\calH$), then
\begin{equation}
	\label{eq:sobolevcomplexity}
	N(\{\calS_\theta\},\epsilon,d) \leq LT\,\epsilon^{-1}
	\quad\Rightarrow\quad
	\Delta \leq 2\log (LT) + 4.08.
\end{equation}

If the pulse $v(t)$ is not smooth, we can still bound the distance between shifts using the total variation of $v(t)$.  Roughly speaking, the total variation is the arc length of the curve that $v(t)$ traces out; technically, we define the total variation of a signal on the real line as 
\[
	\|v(t)\|_{TV} = \lim_{a\rightarrow\infty}
	\sup_{\calP}\left\{\sum_{i=1}^{|\calP|}|v(t_{i+1})-v(t_i)|,~\calP = \{t_1,\ldots,t_{|\calP|}\}~\text{is a partition of $[-a,a]$.}\right\}.
\] 
The following lemma relates the total variation the energy in the difference between shifts.
\begin{lemma}
	\label{lm:tv_shift}
	Let $v(t)$ be a signal with finite total variation.  Then
	\[
		\|v(t-\theta_1) - v(t-\theta_2) \|_2 
		~\leq~ 
		\|v(t)\|_{TV}\, |\theta_1-\theta_2|^{1/2}.
	\]
\end{lemma}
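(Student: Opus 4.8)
The plan is to reduce to the case $\theta_1 = 0$, $\theta_2 = \tau \geq 0$, which costs nothing since the $L_2(\R)$ norm is translation invariant: $\|v(t-\theta_1) - v(t-\theta_2)\|_2 = \|v(t) - v(t-\tau)\|_2$ with $\tau = |\theta_1-\theta_2|$. Writing $w(t) = v(t) - v(t-\tau)$, the goal is $\|w\|_2 \leq \|v(t)\|_{TV}\,\tau^{1/2}$. The one idea I would use is the elementary interpolation bound $\|w\|_2^2 \leq \|w\|_\infty\,\|w\|_1$, which holds pointwise because $|w(t)|^2 \leq \|w\|_\infty\,|w(t)|$. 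After this it suffices to prove the two estimates $\|w\|_\infty \leq \|v(t)\|_{TV}$ and $\|w\|_1 \leq \tau\,\|v(t)\|_{TV}$, whose product is exactly $\|w\|_2^2 \leq \tau\,\|v(t)\|_{TV}^2$.

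Both estimates rest on the single observation that $|v(t) - v(t-\tau)| \leq V_{[t-\tau,\,t]}(v)$ for every $t$, where $V_{[a,b]}(v)$ denotes the total variation of $v$ over $[a,b]$; this is immediate from the definition of total variation applied to the two-point partition $\{t-\tau, t\}$. The $L_\infty$ bound then follows at once, since $V_{[t-\tau,\,t]}(v) \leq \|v(t)\|_{TV}$. For the $L_1$ bound I would pass to the variation (Lebesgue--Stieltjes) measure $\mu = |Dv|$ associated with $v$, so that $V_{[t-\tau,\,t]}(v) = \mu\big((t-\tau,\,t]\big)$ and $\mu(\R) = \|v(t)\|_{TV}$, and then compute by Tonelli's theorem (the integrand being nonnegative)
\[
	\int_\R |v(t) - v(t-\tau)|\,dt
	~\leq~ \int_\R \mu\big((t-\tau,\,t]\big)\,dt
	~=~ \int_\R\!\!\int_\R 1_{[s,\,s+\tau)}(t)\,d\mu(s)\,dt
	~=~ \int_\R \tau\,d\mu(s)
	~=~ \tau\,\|v(t)\|_{TV},
\]
using that $\{t : t-\tau < s \leq t\} = [s,\,s+\tau)$ has Lebesgue measure $\tau$ for each fixed $s$.

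The argument is short, and the only point that needs care is the bounded-variation bookkeeping in the $L_1$ step: identifying the increment $v(t) - v(t-\tau)$ with $\mu\big((t-\tau,\,t]\big)$ up to the (at most countable) jump set and the choice of a left- or right-continuous representative, and confirming the Tonelli exchange is legitimate. One can sidestep the variation measure entirely by bounding $\|w\|_1$ directly through finite partitions and a telescoping argument, but the measure-theoretic route is cleaner. I do not anticipate a genuine obstacle here; the $L_\infty$ estimate is trivial, and the $\tau^{1/2}$ exponent is already the right order of magnitude --- for instance $v = 1_{[0,1]}$ gives $\|w\|_2^2 = 2\tau$ for $0 < \tau < 1$ against $\|v(t)\|_{TV}^2 = 4$ --- so only the numerical constant could conceivably be sharpened.
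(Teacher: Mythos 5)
Your proof is correct and reaches the paper's bound with the same constant, but by a genuinely different route. You interpolate, writing $\|w\|_2^2 \leq \|w\|_\infty\|w\|_1$ for $w(t)=v(t)-v(t-\tau)$, and then prove the two standard estimates $\|w\|_\infty\leq\|v\|_{TV}$ (two-point partition) and $\|w\|_1\leq\tau\|v\|_{TV}$ (variation measure plus Tonelli). The paper instead works directly on the $L_2$ norm: it tiles the line by translates of $[0,\theta_d)$, writes $\int|w|^2 = \int_0^{\theta_d}\sum_k |v(t+k\theta_d)-v(t+(k-1)\theta_d)|^2\,dt$, bounds the sum of squares by the square of the sum, recognizes the inner sum as the variation of $v$ over the partition $\{t+k\theta_d\}_k$ (hence at most $\|v\|_{TV}$), and integrates over $t\in[0,\theta_d]$ to pick up the factor $\theta_d$. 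The paper's argument is entirely elementary --- it needs only the partition definition of total variation and the inequality $\sum_k a_k^2\leq(\sum_k a_k)^2$ for nonnegative $a_k$ --- and it sidesteps the representative-choice and Lebesgue--Stieltjes bookkeeping you correctly flag as the one delicate point in your $L_1$ step (a point that is harmless anyway, since the $L_2$ norm ignores modifications of $v$ on a null set). What your route buys is modularity and generality: the $L_1$ difference-quotient bound is the classical characterization of BV functions and is useful on its own, and the same interpolation immediately gives $\|w\|_p^p\leq\|w\|_\infty^{p-1}\|w\|_1\leq\tau\,\|v\|_{TV}^p$ for any $p\geq 1$, whereas the paper's telescoping computation is tailored to $p=2$.
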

As a quick application, suppose that $v(t)$ is a square pulse of length $\sigma$,
\[
	v(t) = 
	\begin{cases}
		1/\sqrt{\sigma}, & -\sigma/2\leq t\leq\sigma/2, \\
		0 & \text{otherwise}.
	\end{cases}
\]
Then $\|v(t)\|_{TV} = 2/\sqrt{\sigma}$, and
\begin{equation}
	\label{eq:squarecomplexity}
	N(\{\calS_\theta\},\epsilon,d) \leq \frac{4T}{\sigma}\epsilon^{-2}
	\quad\Rightarrow\quad
	\Delta \leq 2\log(T/\sigma) + 8.94.
\end{equation}
In cases where $v(t)$ is not smooth, we do expect $\{\calS_\theta\}$ to be richer collection of subspaces, a fact which is reflected in the powers of $\epsilon$ in the bounds \eqref{eq:sobolevcomplexity} and \eqref{eq:squarecomplexity}.  However, the ultimate difference this makes for $\Delta$ is very minor --- it simply results in a greater additive constant.


\subsection{Gabor pulse, unknown shift and unknown modulation frequency}
\label{sec:gabor}

In our next example, we consider the problem of jointly estimating the shift and modulation frequency of a Gabor pulse:
\[
	g(t,\omega;a,b) = 
	e^{-t^2/2\sigma^2}\left(a\cos(\omega t) + b\sin(\omega t)\right).
\]
Each subspace in the collection has $K=2$, and $\calS_{\theta}$, where $\theta = (\tau,\omega)$ is a multiparameter with $D=2$, consists of the projection onto $\calH$ of the span of $e^{-(t-\tau)^2/2\sigma^2}\cos(\omega (t-\tau))$ and $e^{-(t-\tau)^2/2\sigma^2}\sin(\omega (t-\tau))$.  For the search space, we take $\Theta = [0,T]\times [\Omega_l,\Omega_u]$.  Define
\begin{align*}
	v_1(t;\tau,\omega) &= C_1(\omega) e^{-(t-\tau)^2/2\sigma^2}\cos(\omega(t-\tau)),
	\quad C_1(\omega) =  \sqrt{\frac{2}{\sigma\sqrt{\pi}(1+e^{-\omega^2\sigma^2})}}\\
	v_2(t;\tau,\omega) &= C_2(\omega) e^{-(t-\tau)^2/2\sigma^2}\sin(\omega(t-\tau)),
	\quad C_2(\omega) = \sqrt{\frac{2}{\sigma\sqrt{\pi}(1-e^{-\omega^2\sigma^2})}}
\end{align*}
We will assume that the pulse has at least a mild amount of oscillation by restricting $\Omega_l\sigma > 1$.  We will bound the distances in two parts, using
\begin{align*}
	\|v_1(t;\tau_1,\omega_1)-v_1(t;\tau_2,\omega_2)\|_2 &\leq
	\|v_1(t;\tau_1,\omega_1)-v_1(t;\tau_2,\omega_1)\|_2 +
	\|v_1(t;\tau_2,\omega_1)-v_1(t;\tau_2,\omega_2)\|_2,
\end{align*}
and do a similar calculation for $v_2$.

For the time shift, we use Lemma~\ref{lm:sobolev_shift} along with the following expressions for the Fourier transforms:
\begin{align*}
	\hat{v}_1(\xi;\tau,\omega) &= \frac{1}{1+e^{-\omega^2\sigma^2}}
	\left(e^{-(\xi-\omega_1)^2\sigma^2/2} + e^{-(\xi+\omega_1)^2\sigma^2/2}\right)e^{-j\tau\xi} \\
	\hat{v}_2(\xi;\tau,\omega) &= \frac{1}{1-e^{-\omega^2\sigma^2}}
	\left(e^{-(\xi-\omega_1)^2\sigma^2/2} - e^{-(\xi+\omega_1)^2\sigma^2/2}\right)e^{-j(\tau\xi+\pi/2)}.
\end{align*}
This leads to the bound
\begin{align*}
	\|v_1(t;\tau_1,\omega_1)-v_1(t;\tau_2,\omega_1)\|_2&\leq
	\sqrt{\left(\omega_1^2 + \frac{1}{\sigma^2}\right)}\cdot|\tau_1-\tau_2|,
\end{align*} 
and the same bound holds for $\|v_2(t;\tau_1,\omega_1)-v_2(t;\tau_2,\omega_1)\|_2$.  For the frequency shift,
\begin{align*}
	\|v_1(t;\tau_2,\omega_1)-v_1(t;\tau_2,\omega_2)\|_2 &=
	(2\pi)^{-1/2}\|\hat{v}_1(\xi;\tau_2,\omega_1)-\hat{v}_1(\xi;\tau_2,\omega_2)\|_2 \\
	&\leq \max(C_1(\omega_1),C_1(\omega_2))\sigma\left\|e^{-(\xi-\omega_1)^2\sigma^2/2} - e^{-(\xi-\omega_2)^2\sigma^2/2}\right\|_2 \\
	&\leq \sigma|\omega_1-\omega_2|,
\end{align*}
where the last step follows from a similar calculation to the one at the beginning of Section~\ref{sec:pulsedelay}.  The calculation for the frequency shift of the $v_2$ is only slightly different,
\begin{align*}
	\|v_2(t;\tau_2,\omega_1)-v_2(t;\tau_2,\omega_2)\|_2 
	&\leq \max(C_2(\omega_1),C_2(\omega_2))\,\sigma\left\|e^{-(\xi-\omega_1)^2\sigma^2/2} - e^{-(\xi-\omega_2)^2\sigma^2/2}\right\|_2 \\
	&\leq 1.582\,\sigma|\omega_1-\omega_2|,
\end{align*}
where we have used the fact that $\Omega_l\sigma\geq 1$ to bound the maximum in the last inequality.

We can now combine these inequalities to bound the distance between the projectors.  With $\theta_1=(\tau_1,\omega_1)$ and $\theta_2 = (\tau_2,\omega_2)$, we have
\[
	\|\Ptha-\Pthb\| \leq 2\sqrt{2}\left(\left(\Omega_u^2+\frac{1}{\sigma^2}\right)^{1/2}|\tau_1-\tau_2| + 1.582\sigma|\omega_1-\omega_2|\right)
\] 
With $\Omega=\Omega_u-\Omega_l$ as the length of the search interval in frequency and $T$ length of the search interval in time, we can then bound the covering numbers as
\[
	N(\{\calS_\theta),\epsilon,d) ~\leq~ \frac{9\sqrt{\sigma^2\Omega_u^2+1}\,T\,\Omega}{\epsilon^2}
	\quad\Rightarrow\quad
	\Delta\leq2\log((\Omega_u^2\sigma^2+1)^{1/2}T\Omega) + 8.36.
\]

\subsection{Multifrequency pulse, unknown shift}

Consider the problem of trying to find a pulse (again at a location in $[0,T]$) which is localized in time and exists over a known frequency band.  We might model such a signal as a superposition of modulations of a fixed envelope function $g(t)$:
\[
	h(t) = \sum_{k=1}^K \beta_k\, v_k(t),\quad\text{with}~v_k(t) = g(t)\cos(\omega_k t).
\]
With a careful choice of $g(t)$ and frequencies $\omega_k$, the $v_k(t)$ are orthogonal; this is known as the {\em lapped orthogonal transform} \cite{malvar89lo}.  To illustrate, we consider the particular case of 
\begin{displaymath}
	g(t) = \sqrt{\frac{2}{\sigma}}\left\{
    \begin{array}{rrl}
      0 : & t & < -\sigma/4
\\    \sin\left(\frac{\pi}{4}\left(1+\frac{4t}{\sigma}\right)\right) : & -\sigma/4 \leq ~ t & < \sigma/4
\\    1 : & \sigma/4 \leq ~ t & < 3\sigma/4
\\    \sin\left(\frac{\pi}{4}\left(5-\frac{4t}{\sigma}\right)\right) : & 3\sigma/4 \leq ~ t & < 5\sigma/4
\\    0 : & \sigma/4 \leq ~ t &
    \end{array}
  \right. ,
  \quad
  \omega_k = \frac{\pi}{\sigma}\left(k-\frac{1}{2}\right),
\end{displaymath}
for a fixed width $\sigma$.  The $v_k$ are orthogonal, concentrated on $[0,\sigma]$ (and compactly supported on $[-\sigma/4,5\sigma/4]$), have a maximum height of $\sqrt{2/\sigma}$, and have at most $3k/2+2$ extrema.  These facts combine to give us an upper bound on the total variation:
\[
	\|v_k(t)\|_{TV} ~\leq~ \sqrt{2/\sigma}\,(3k+4).
\]
Then Lemma~\ref{lm:tv_shift} and \eqref{eq:Pvbound} combine to give us (assuming $K\geq 2$),
\begin{align*}
	\|\Ptha-\Pthb\| &\leq 2|\theta_1-\theta_2|^{1/2}\left((2/\sigma)\sum_{k=1}^K (3k+4)^2\right)^{1/2} 
	\leq 12.5\cdot K^{3/2}(|\theta_1-\theta_2|/\sigma)^{1/2},
\end{align*}
and so we can take $\alpha=2$ and $N_0=(12.5)^2K^3/\sigma$, and 
\[
	N(\{\calS_\theta\},\epsilon,d)~\leq~ \frac{(12.5)^2K^3T}{\sigma}\epsilon^{-2}
	\quad\Rightarrow\quad
	\Delta \leq 6\log K + 2\log(T/\sigma) + 16.27.
\]
Here we see the dimension $K$ of the subspaces coming out explicitly in the geometric constant.



\section{Technical details}


\subsection{Proof of Theorem~\ref{th:supW}}
\label{sec:supWproof}

Let $\theta$ be any member of $\Theta$, and let $\Vth$ be a $N\times K$ matrix whose columns are an orthobasis for $\calS_\theta$.  We can write $\Pth = \Vth\Vth^\T$ and
\begin{align*}
	\|\Pth - \Pth\mPhi^\T\mPhi\Pth\| = \|\Vth(\mId-\Vth^\T\mPhi^\T\mPhi\Vth)\Vth^\T\| = \|\mId - \Vth^\T\mPhi^\T\mPhi\Vth\|.
\end{align*}
By (C1), all the eigenvalues of $\Vth^\T\mPhi^\T\mPhi\Vth$ are between $1\pm\delta_1$, meaning that $(\Vth^\T\mPhi^\T\mPhi\Vth)^{-1}$ exists and is itself close to the identity.  We can write
\begin{align*}
	\tilde\Pth &= \mPhi\Vth(\Vth^\T\mPhi^\T\mPhi\Vth)^{-1}\Vth^\T\mPhi^\T \\
	&= \mPhi\Vth(\mId + \Hth)\Vth^\T\mPhi^\T,
\end{align*}
where we can bound the size of $\Hth$ using a Neumann series:
\begin{align*}
	\|\Hth\| &= \left\|(\Vth^\T\mPhi^\T\mPhi\Vth)^{-1}-\mId\right\| \\
	&= \left\|\sum_{n=1}^\infty (\mId - \Vth^\T\mPhi^\T\mPhi\Vth)^n\right\| \\
	&\leq \sum_{n=1}^\infty \| \mId - \Vth^\T\mPhi^\T\mPhi\Vth \|^n \\
	&\leq \frac{\delta_1}{1-\delta_1}.
\end{align*}
With $\vy=\mPhi\vh_0$,
\begin{align}
	\nonumber
	\left| \|\tilde\Pth\vy\|_2^2 - \|\Pth\vh_0\|_2^2\right| &=
	\left|\<\vy,\mPhi\Vth(\mId+\Hth)\Vth^\T\mPhi^\T\vy\> - \|\Pth\vh_0\|_2^2\right| \\
	\nonumber
	&\leq \left|\|\Pth\mPhi^\T\vy\|_2^2 - \|\Pth\vh_0\|_2^2\right| + \left|\<\mPhi^\T\vy,\Vth\Hth\Vth^\T\mPhi^\T\vy\>\right| \\
	\nonumber
	&\leq \left|\|\Pth\mPhi^\T\vy\|_2^2 - \|\Pth\vh_0\|_2^2\right| + \|\Vth\Hth\Vth^\T\|\,\|\Pth\mPhi^\T\vy\|_2^2 \\
	\label{eq:Winner}
	&\leq \left|\|\Pth\mPhi^\T\vy\|_2^2 - \|\Pth\vh_0\|_2^2\right| + \frac{\delta_1}{1-\delta_1}\|\Pth\mPhi^\T\vy\|_2^2.
\end{align}
We factor the first term above,
\begin{align*}
	\|\Pth\mPhi^\T\vy\|_2^2 - \|\Pth\vh_0\|_2^2 &=
	\left(\|\Pth\mPhi^\T\mPhi\vh_0\|_2 + \|\Pth\vh_0\|_2\right)\left(\|\Pth\mPhi^\T\mPhi\vh_0\|_2 - \|\Pth\vh_0\|_2\right) 
\end{align*}
and then use the fact that 
\begin{align*}
	\|\Pth\mPhi^\T\mPhi\vh_0\|_2 &\leq \|\Pth\mPhi^\T\mPhi\Pth\vh_0\|_2 + \|\Pth\mPhi^\T\mPhi\Pth^\perp\vh_0\|_2 \\
	&\leq (1+\delta_1)\|\Pth\vh_0\|_2+\delta_2,
\end{align*}
to get
\begin{align*}
	\left|\|\Pth\mPhi^\T\vy\|_2^2 - \|\Pth\vh_0\|_2^2\right| &\leq (2+\delta_1+\delta_2)(\delta_1+\delta_2) =: \delta_3.
\end{align*}
Returning to \eqref{eq:Winner}, we finally have
\begin{align*}
	\left| \|\tilde\Pth\vy\|_2^2 - \|\Pth\vh_0\|_2^2\right| &\leq \delta_3 + \frac{\delta_1(1+\delta_3)}{1-\delta_1} \\
	&= \frac{\delta_1+\delta_3}{1-\delta_1}.
\end{align*}

\subsection{Chaining}
\label{sec:chaining}

The following lemma allows us to easily relate a probabilistic bound on the increments of a random process to a probabilistic bound on its supremum.  We will use it to establish both Theorems~\ref{th:ProjectEmbed} and \ref{th:suphperp}.

\begin{proposition}
	\label{prop:chaining}
	Let $\{\calS_\theta\}$ be a collection of subspaces that is geometrically regular, parameterized by $N_0,\alpha$ as in \eqref{eq:georeg}.  Let $\mZ(\theta)$ be a matrix-valued random process indexed by the same parameters $\theta\in\Theta$.  If for every $\theta_1,\theta_2\in\Theta$ we have 
	\[
		\P{\|\mZ(\theta_1)-\mZ(\theta_2)\| ~\geq~ \gamma(u)\cdot\|\Ptha-\Pthb\|} 
		~\leq~
		\beta\,\e^{-u},
	\]
	where $\beta$ is a constant and $\gamma(u)$ has the form $\gamma(u)=a\sqrt{u} + bu + cu^2$, then for any fixed $\theta_0\in\Theta$,
	\begin{equation}
		\label{eq:supchain}
		\P{\sup_{\theta\in\Theta}\|\mZ(\theta)-\mZ(\theta_0)\| \geq 3\gamma(u) + 6c\log^2(4^\alpha\e)}
		~\leq~
		\beta\,\e^{-u+\Delta},
	\end{equation}
	where $\Delta = \log(8^\alpha N_0^2) + 2$.
\end{proposition}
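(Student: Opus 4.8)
The plan is to run a standard Dudley-type chaining argument tailored to the covering numbers of $\{\calS_\theta\}$ under the metric $d(\calS_{\theta_1},\calS_{\theta_2}) = \|\Ptha-\Pthb\|$. First I would fix, for each integer $j\geq 0$, an $\epsilon_j := 2^{-j}$-approximation $\{\calT^{(j)}_q\}$ of cardinality $N_j \leq N(\{\calS_\theta\},2^{-j}) \leq N_0\,2^{j\alpha}$, together with the associated ``rounding'' map $\pi_j:\Theta\to\{\theta^{(j)}_q\}$ sending $\theta$ to (the index of) a nearest element, so that $d(\calS_\theta,\calS_{\pi_j(\theta)}) \leq 2^{-j}$. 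Fix the base point $\theta_0$; for $j=0$ we may take the single-point approximation to be $\{\theta_0\}$ after increasing $N_0$ by at most a constant (this is where the ``$+2$'' and the doubling in $N_0^2$ in $\Delta$ come from). Then for any $\theta\in\Theta$ we telescope
\[
\mZ(\theta)-\mZ(\theta_0) = \sum_{j\geq 1}\bigl(\mZ(\pi_j(\theta))-\mZ(\pi_{j-1}(\theta))\bigr),
\]
which converges since the process is (almost surely) continuous in $\theta$ and the chain links shrink. Each link connects two points whose $d$-distance is at most $d(\calS_{\pi_j(\theta)},\calS_\theta)+d(\calS_\theta,\calS_{\pi_{j-1}(\theta)}) \leq 2^{-j}+2^{-(j-1)} = 3\cdot 2^{-j}$, and the number of distinct such links at level $j$ is at most $N_jN_{j-1} \leq N_0^2 2^{(2j-1)\alpha}$.

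Next I would apply the hypothesized increment bound with a union bound over all links at level $j$. For each level, choosing the deviation parameter as $u_j := u + \log(N_0^2 2^{2j\alpha}) = u + \log N_0^2 + 2j\alpha\log 2$ makes the failure probability at level $j$ at most $\beta\,\e^{-u}\,2^{-2j\alpha}\cdot(\text{const})$, so the total failure probability summed over $j\geq 1$ is $\beta\,\e^{-u+\Delta}$ for $\Delta$ of the stated form (the geometric sum over $j$ and the $j=0$ bookkeeping absorb into the constants $8^\alpha$ and $+2$). On the complementary event, every link at level $j$ satisfies
\[
\|\mZ(\pi_j(\theta))-\mZ(\pi_{j-1}(\theta))\| \leq \gamma(u_j)\cdot 3\cdot 2^{-j}.
\]
Since $\gamma(u) = a\sqrt{u}+bu+cu^2$ is subadditive in the sense that $\gamma(x+y)\leq\gamma(x)+\gamma(y)$ up to the cross term in the quadratic — more precisely $\gamma(u+v) \leq \gamma(u) + a\sqrt v + bv + cv^2 + 2cuv$, and I would handle the $2cuv$ cross term carefully — I split $\gamma(u_j) \leq \gamma(u) + \gamma(2j\alpha\log 2 + \log N_0^2)$ and observe the residual piece is polynomial in $j$ (dominated by the $cj^2$ term). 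Summing the geometric series $\sum_j 3\cdot 2^{-j}\gamma(u) = 3\gamma(u)$ gives the main term, while $\sum_j 3\cdot 2^{-j}\cdot c\,(2j\alpha\log 2 + \log N_0^2)^2$ is a convergent sum producing a term of the form $6c\log^2(4^\alpha\e)$ after bounding the sum $\sum_j 2^{-j}j^2$ and folding in $N_0$ — this is the source of the additive $6c\log^2(4^\alpha\e)$ in \eqref{eq:supchain}.

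The main obstacle I anticipate is the bookkeeping around the quadratic term $cu^2$ in $\gamma$: unlike the $\sqrt u$ and linear pieces, $\gamma$ is \emph{not} subadditive because of the cross term $2cuv$, so naively splitting $u_j = u + v_j$ loses a factor $2cuv_j$ per level, and $\sum_j 2^{-j}\cdot 2cuv_j$ reintroduces a term proportional to $cu$ (not just $cu^2$) that is not visible in the claimed bound. The resolution is to use that $2uv \leq u^2 + v^2$, so $2cuv_j \leq c u^2 + c v_j^2 \leq \gamma(u) + \gamma(v_j)$ after adjusting constants; alternatively, one absorbs it by noting $\gamma(u_j)\leq \gamma(u + v_j) \leq 2\gamma(u) + 2\gamma(v_j)$ whenever $cu^2$ is the dominant behavior, and then tightens the constant $3$ to the stated one by being slightly more economical in the telescoping (e.g.\ bounding $d$-distances of links by $3\cdot 2^{-j}$ but summing only from $j\geq 1$). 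I would also need to confirm that $\gamma(v_j) = a\sqrt{v_j}+bv_j+cv_j^2$ with $v_j = \Theta(j\alpha + \log N_0)$ yields, after multiplication by $2^{-j}$ and summation, exactly the constant $6\log^2(4^\alpha\e)$ times $c$ claimed — this amounts to the elementary estimate $\sum_{j\geq 1}2^{-j}(j\alpha\log 4 + \log N_0^2 + \text{const})^2 \leq 6\log^2(4^\alpha\e N_0^2)$ and then dropping the $N_0$-dependence into $\Delta$ rather than the additive term, which I would verify is consistent with the stated form of $\Delta$.
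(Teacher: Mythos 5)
Your skeleton (dyadic nets $\calT_j$ of cardinality $N_0 2^{\alpha j}$, rounding maps $\pi_j$, telescoping from $\theta_0$, union bound over the at most $N_0^2 2^{\alpha(2j\pm 1)}$ links per level) is exactly the paper's, but two steps as written do not go through. First, with your choice $u_j = u + \log(N_0^2 2^{2j\alpha})$ the level-$j$ failure probability is (number of links)$\,\times\,\beta\e^{-u_j} \le N_0^2 2^{2j\alpha}\cdot \beta\e^{-u}N_0^{-2}2^{-2j\alpha} = \beta\e^{-u}\cdot\mathrm{const}$: the union-bound cardinality exactly cancels the decay you built into $u_j$, the per-level bound is constant in $j$ rather than $2^{-2j\alpha}$ as you assert, and the sum over $j$ diverges. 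You need $u_j$ to grow strictly faster than $\log\left(\operatorname{card}(\calT_j)\operatorname{card}(\calT_{j-1})\right)$; the paper takes $u_j = u + (2\alpha\log 2 + 1)j + (\alpha\log 2 + 2\log N_0 + 1)$, the extra $+j$ supplying a summable factor $\e^{-j}$ and accounting for the precise constants in $\Delta$.

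Second, and more substantively, your treatment of $\sum_j 2^{-j}\gamma(u_j)$ cannot yield the bound as stated. Splitting $\gamma(u+v_j)\le 2\gamma(u)+2\gamma(v_j)$, or handling the cross term via $2uv\le u^2+v^2$, produces a main term of $6\gamma(u)$ rather than $3\gamma(u)$, and the residual $\sum_j 2^{-j}\gamma(v_j)$ with $v_j$ of order $j\alpha+\log N_0$ contributes pieces proportional to $a$ and $b$ and depending on $N_0$, whereas the claimed additive term $6c\log^2(4^\alpha\e)$ involves only $c$ and $\alpha$. The missing idea is the paper's Lemma~\ref{lm:gsum}: because the weights $\tfrac12 2^{-j}$ sum to $1$ and $\sum_{j\ge 0} j\cdot\tfrac12 2^{-j}=1$, Jensen's inequality applied to the concave part $a\sqrt{\cdot}+b(\cdot)$ bounds that portion of the sum by $a\sqrt{u+p+q}+b(u+p+q)$ with no constant loss, while the quadratic part is computed exactly using $\sum_{j\ge0} j^2\cdot\tfrac12 2^{-j}=3$, giving $c(u+p+q)^2+2cp^2$. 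Altogether $\sum_j\tfrac12 2^{-j}\gamma(u_j)\le \gamma(u+\Delta)+2cp^2$ with $p=\log(4^\alpha\e)$ and $p+q=\Delta$; the shift $u\mapsto u+\Delta$ is then transferred into the exponent of the failure probability (producing the $\e^{-u+\Delta}$) instead of degrading the deviation, and only $6cp^2=6c\log^2(4^\alpha\e)$ survives as a separate additive term. Without this device you obtain a qualitatively similar but strictly weaker statement, with a worse leading constant and an additive term contaminated by $a$, $b$, and $N_0$.
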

If $\mZ(\theta)$ is vector-valued, then the above holds with the increment size measured in the standard Euclidean norm, $\|\mZ(\theta_1)-\mZ(\theta_2)\|_2$.

\begin{proof}
	We use a standard chaining argument (see for example \cite[Thm.\ 1.2.7]{talagrand05ge}).
	
	Let $\calT_0 = \{\theta_0\}$, and $\calT_j$ be a series of $\epsilon$-approximations of $\Theta$ with $\epsilon=2^{-j}$.  We know that we can choose these with $\operatorname{card}(\calT_j)\leq N_02^{\alpha j}$.  For any $\theta\in\Theta$, let $\pi_j(\theta)$ be the closest point in $\calT_j$ to $\theta$.  Then for every $\theta$, $\|\mZ(\theta)-\mZ(\theta_0)\|\leq \sum_{j\geq 0}\|\mZ(\pi_{j+1}(\theta))-\mZ(\pi_j(\theta))\|$.  Thus for any given sequence $u_j,~j=0,1,\ldots,\infty$, we have
	\begin{align*}
		\P{\sup_{\theta\in\Theta} \|\mZ(\theta)-\mZ(\theta_0)\| \geq \sum_{j\geq0}\frac{3}{2}2^{-j}\gamma(u_j)}
		&\leq \sum_{j\geq 0}\P{\sup_{\theta\in\Theta}
		\|\mZ(\pi_{j+1}(\theta))-\mZ(\pi_{j}(\theta))\|\geq\frac{3}{2}2^{-j}\gamma(u_j)}.
	\end{align*}
	Now since the number of unique pairs of $\pi_{j+1}(\theta),\pi_j(\theta)$ over all $\theta$ is at most $\operatorname{card}(\calT_j)\operatorname{card}(\calT_{j+1})\leq N_0^22^{\alpha(2j+1)}$, and $\|\mP_{\pi_{j+1}(\theta)} - \mP_{\pi_j(\theta)}\|\leq (3/2)2^{-j}$, we apply the increment tail bound and see that
	\begin{align*}
		\P{\sup_{\theta\in\Theta} \|\mZ(\theta)-\mZ(\theta_0)\| \geq \sum_{j\geq0}\frac{3}{2}2^{-j}\gamma(u_j)}
		&\leq C_a N_0^2\sum_{j\geq 0} 2^{\alpha(2j+1)}\e^{-u_j}.
	\end{align*}
	Taking $u_j = u+ (2\alpha\log 2 + 1)j + (\alpha\log 2 + 2\log N_0 + 1)$ immediately gives us the righthand side of \eqref{eq:supchain}.  The lefthand side comes from applying the following simple lemma, which we prove in the appendix. 
	
	\begin{lemma}
		\label{lm:gsum}
		Let $u_j = u + p j + q$, where $u,p,$ and $q$ are positive constants.  Let
		$\gamma(u) = a\sqrt{u} + bu + cu^2$ with $a,b,c\geq 0$.  Then
		\begin{align}
			\label{eq:gsum}
			\sum_{j\geq 0} \frac{1}{2}2^{-j} \gamma(u_j)
			&\leq
			\gamma(u+p+q) + 2cp^2.
		\end{align}
	\end{lemma}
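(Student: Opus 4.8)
The plan is to decompose $\gamma$ into its three monomials $a\sqrt{u}$, $bu$, $cu^2$, bound the weighted sum $\sum_{j\geq0}\tfrac12 2^{-j}(\,\cdot\,)$ of each monomial separately against the corresponding term of $\gamma(u+p+q)$, and then recombine. The only inputs will be the elementary moment identities
\[
\sum_{j\geq0}\tfrac12 2^{-j}=1,\qquad
\sum_{j\geq0}\tfrac12 2^{-j}\,j=1,\qquad
\sum_{j\geq0}\tfrac12 2^{-j}\,j^2=3,
\]
obtained by differentiating $\sum_{j\geq0}x^j=(1-x)^{-1}$ once and twice and evaluating at $x=\tfrac12$. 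Since the weights $w_j:=\tfrac12 2^{-j}$ are nonnegative and sum to one, they define a probability distribution on $\{0,1,2,\dots\}$, and the first two identities give $\sum_j w_j u_j = (u+q)+p\sum_j w_j j = u+p+q$, i.e.\ $u+p+q$ is exactly the $w$-mean of $u_j$.

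With that in hand I would treat the three pieces in increasing order of difficulty. For the linear term, $\sum_j w_j\,bu_j=b(u+p+q)$ holds with equality, directly from the mean computation. For the square-root term I would invoke Jensen's inequality: $t\mapsto\sqrt t$ is concave on $[0,\infty)$ and each $u_j\geq u+q>0$, so $\sum_j w_j\sqrt{u_j}\leq\sqrt{\sum_j w_j u_j}=\sqrt{u+p+q}$, whence $\sum_j w_j\,a\sqrt{u_j}\leq a\sqrt{u+p+q}$. For the quadratic term I would expand $u_j^2=((u+q)+pj)^2$ and use the second-moment identity:
\[
\sum_j w_j u_j^2=(u+q)^2+2p(u+q)\!\sum_j w_j j+p^2\!\sum_j w_j j^2=(u+q)^2+2p(u+q)+3p^2=(u+p+q)^2+2p^2 .
\]
Multiplying by $c$ and summing the three bounds yields $\sum_j w_j\gamma(u_j)\leq a\sqrt{u+p+q}+b(u+p+q)+c(u+p+q)^2+2cp^2=\gamma(u+p+q)+2cp^2$, which is \eqref{eq:gsum}.

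I do not expect a genuine obstacle: the statement is an elementary inequality, and its entire content is the bookkeeping of the first two moments of the geometric weights together with one use of Jensen for the concave part. The only point requiring any attention is the gap between $\sum_j w_j j^2=3$ and $\big(\sum_j w_j j\big)^2=1$ --- this is precisely what produces the additive $2cp^2$ and rules out a clean bound of the form $\sum_j w_j\gamma(u_j)\leq\gamma(u+p+q)$; for the concave ($\sqrt{\cdot}$) and linear parts no such correction is needed.
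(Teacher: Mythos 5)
Your proof is correct and follows essentially the same route as the paper's: the same geometric-weight moment identities $\sum_j \tfrac12 2^{-j}=1$, $\sum_j \tfrac12 2^{-j}j=1$, $\sum_j \tfrac12 2^{-j}j^2=3$, Jensen for the concave part, and direct expansion for the quadratic part producing the $2cp^2$ excess. The only cosmetic difference is that the paper applies Jensen to the combined concave function $a\sqrt{u_j}+bu_j$ whereas you treat the linear term by exact equality, which changes nothing.
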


\end{proof}

\subsection{Proof of Theorem~\ref{th:ProjectEmbed}}
\label{sec:th2proof}

The proof of Theorem~\ref{th:ProjectEmbed} follows from combining Proposition~\ref{prop:chaining}, Lemma~\ref{lm:ProjectIncrement} below, and the fact that for any matrix-valued random process $\mZ(\theta)$,
\begin{align}
	\label{eq:supZtheta}
	\sup_{\theta\in\Theta} \|\mZ(\theta)\| 
	&\leq \sup_{\theta\in\Theta} \|\mZ(\theta)-\mZ(\theta_0)\| + \|\mZ(\theta_0)\|,
\end{align}
for any fixed $\theta_0\in\Theta$.  For a fixed $\theta_0$, we can bound $\|\mZ(\theta_0)\|$ using a standard argument; a bound on the first term above is given by Lemma~\ref{lm:ProjectIncrement} below.

With $\theta_0$ given, the quantity $\|\mP_{\theta_0} - \mP_{\theta_0}\mPhi^\T\mPhi\mP_{\theta_0}\|$ has the exact same distribution as $\|\mId - \mA^\T\mA\|$, where $\mA$ is a $M\times K$ Gaussian random matrix whose entries have a variance of $1/M$.  This random variable describes how far the singular values of an appropriately normalized Gaussian random matrix deviate from $1$, and its behavior is well known.  In particular, we know (see \cite[Lemma 26]{vershynin12in} which is derived from results in \cite{davidson01lo}) that
\[
	\P{\|\mP_{\theta_0} - \mP_{\theta_0}\mPhi^\T\mPhi\mP_{\theta_0}\| > 3\gamma(t)} ~\leq~ 2\e^{-t},
\]
for 
\[
	\gamma(t) = \max(\delta(t),\delta^2(t)),\quad
	\delta(t) = \sqrt{\frac{K}{M}} + \sqrt{\frac{2t}{M}}.
\]

We combine this result with the following lemma, which gives us a bound on $\sup_{\theta\in\Theta} \|\mZ(\theta)-\mZ(\theta_0)\|$ in \eqref{eq:supZtheta}.
\begin{lemma}
	\label{lm:ProjectIncrement}
	Let $\mG_\theta = \Pth - \Pth\mPhi^\T\mPhi\Pth$.  There exists a constant $C$ such that for any fixed $\tha,\thb\in\Theta$,
	\begin{equation}
		\label{eq:projectincrement}
		\P{\|\mG_\tha-\mG_\thb\| \geq C\,\gamma(t)\,\|\Ptha-\Pthb\|}
		~\leq~
		5K\,\e^{-t},
	\end{equation}
	where
	\[
		\gamma(t) =
		\sqrt{\frac{Kt}{M}} + \frac{(K + \log(M/K))t}{M} + \frac{t^2}{M}.
	\]
\end{lemma}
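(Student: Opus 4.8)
The plan is to pull the factor $\|\Ptha-\Pthb\|$ out of $\mG_\tha-\mG_\thb$, reducing the increment to the deviation of a single Gaussian Wishart matrix of dimension at most $2K$, and then invoke the singular-value tail bound already quoted in Section~\ref{sec:th2proof}. Write $\mD:=\Ptha-\Pthb$, so that $\|\mD\|=d(\calS_\tha,\calS_\thb)$. Starting from $\mG_\theta=\Pth-\Pth\mPhi^\T\mPhi\Pth$ together with the elementary projector identity $\mD=\Ptha\mD+\mD\Pthb$ (which is just $\Ptha(\Ptha-\Pthb)+(\Ptha-\Pthb)\Pthb=\Ptha-\Pthb$), a short computation gives
\[
	\mG_\tha-\mG_\thb ~=~ \Ptha(\mId-\mPhi^\T\mPhi)\mD ~+~ \mD(\mId-\mPhi^\T\mPhi)\Pthb .
\]
The two terms are transposes of one another (with $\tha$ and $\thb$ swapped), so it is enough to bound $\|\Ptha(\mId-\mPhi^\T\mPhi)\mD\|$.

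The key point is that $\mD$, $\Ptha$ and $\Pthb$ all act within the subspace $\calS_\tha+\calS_\thb$, whose dimension is some $L\le 2K$. Let $\mW\in\R^{N\times L}$ have orthonormal columns spanning this subspace. Then $\Ptha=\Ptha\mW\mW^\T$; $\mD=\mW C\mW^\T$ with $C:=\mW^\T\mD\mW$ symmetric and $\|C\|=\|\mD\|$; and $\mPhi\mV_\tha=(\mPhi\mW)(\mW^\T\mV_\tha)$ for any orthobasis $\mV_\tha$ of $\calS_\tha$. Substituting these and cancelling, the expression collapses to
\[
	\Ptha(\mId-\mPhi^\T\mPhi)\mD ~=~ \mV_\tha(\mV_\tha^\T\mW)\big(\mId-(\mPhi\mW)^\T(\mPhi\mW)\big)C\mW^\T ,
\]
and because $\mV_\tha$ and $\mW$ have orthonormal columns this yields
\[
	\|\mG_\tha-\mG_\thb\| ~\le~ 2\,\big\|\mId-(\mPhi\mW)^\T(\mPhi\mW)\big\|\cdot\|\Ptha-\Pthb\| .
\]

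It remains to control $\big\|\mId-(\mPhi\mW)^\T(\mPhi\mW)\big\|$. Since $\mW$ is a fixed matrix with orthonormal columns, $\mPhi\mW$ is an $M\times L$ matrix with i.i.d.\ Gaussian entries of variance $1/M$, so this quantity is exactly how far the singular spectrum of a normalized Gaussian matrix deviates from $1$. Applying the bound quoted in Section~\ref{sec:th2proof} (from \cite{davidson01lo,vershynin12in}) with the subspace dimension $L\le 2K$ gives $\big\|\mId-(\mPhi\mW)^\T(\mPhi\mW)\big\|\le 3\max(\delta,\delta^2)$ with probability at least $1-2\e^{-t}$, where $\delta=\sqrt{L/M}+\sqrt{2t/M}\le\sqrt{2K/M}+\sqrt{2t/M}$, so $\max(\delta,\delta^2)\lesssim\sqrt{K/M}+\sqrt{t/M}+K/M+t/M$. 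This already establishes the lemma: when $t<1$ the claimed probability $5K\e^{-t}$ exceeds $1$ and there is nothing to prove, while for $t\ge1$ we have $\sqrt{K/M}\le\sqrt{Kt/M}$ and $K/M\le Kt/M$, so the bound is $\le C\big(\sqrt{Kt/M}+(K+\log(M/K))t/M+t^2/M\big)=C\gamma(t)$, and $2\le 5K$. (This route is in fact sharper than the stated $\gamma$: it produces no $\log(M/K)$ or $t^2/M$ term and only a constant in the probability.)

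The only real pitfall is the reduction step, where one must restrict to $\calS_\tha+\calS_\thb$ from the start. If instead one splits $\mD$ into its components in $\calS_\tha$ and $\calS_\tha^\perp$, the orthogonal part produces a genuine cross term $(\mPhi\mV_\tha)^\T(\mPhi\mW')$ with $\mW'\perp\mV_\tha$ --- a product of two independent Gaussian blocks --- whose operator norm one would then bound by an $\epsilon$-net argument over a sphere of dimension $\le K$; that route works as well and is presumably the origin of the $K+\log(M/K)$ and $t^2$ contributions (and of the factor $K$ in the stated probability), but it is strictly messier. Either way no union bound over $\Theta$ enters at this stage, since $\tha$ and $\thb$ are fixed; uniformity over $\Theta$ is then supplied by Proposition~\ref{prop:chaining}.
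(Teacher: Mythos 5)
Your proof is correct, and it takes a genuinely different --- and in fact sharper --- route than the paper's. The identity $\mG_\tha-\mG_\thb=\Ptha(\mId-\mPhi^\T\mPhi)\mD+\mD(\mId-\mPhi^\T\mPhi)\Pthb$ checks out (it is the telescoping $\Ptha\mA\Ptha-\Pthb\mA\Pthb=\Ptha\mA\mD+\mD\mA\Pthb$ with $\mA=\mId-\mPhi^\T\mPhi$, combined with $\Ptha\mD+\mD\Pthb=\mD$), and the compression onto an orthobasis $\mW$ of $\calS_\tha+\calS_\thb$ is legitimate because $\tha,\thb$ are fixed, so $\mW$ is deterministic and $\mPhi\mW$ is a genuine $M\times L$ Gaussian matrix with $L\le 2K$ by rotation invariance; the bookkeeping ($\|\mV_\tha^\T\mW\|\le1$, $\|\mW^\T\mD\mW\|=\|\mD\|$) is right. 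The paper instead writes $\mG_\tha-\mG_\thb=\sum_m\mX_m$ as a sum of $M$ independent zero-mean matrices built from the rows $\vphi_m$, conditions on the event that $\max_m\|\mP_{\theta_1,\theta_2}\vphi_m\|_2$ is small, and applies the matrix Bernstein inequality; that conditioning and the Bernstein tail are exactly where the $(K+\log(M/K))t/M$ and $t^2/M$ terms and the dimensional factor $5K$ in the failure probability come from. Your reduction to the extreme singular values of a single $M\times L$ Gaussian block yields subgaussian increments, $\P{\cdot}\le 2\e^{-t}$ with $\gamma(t)\lesssim\sqrt{Kt/M}+Kt/M$, which strictly implies the stated lemma (and, fed into Proposition~\ref{prop:chaining} with $c=0$, would even remove the $t^2/M$ contribution from Theorem~\ref{th:ProjectEmbed}). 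The only points to make explicit in a final write-up are the low-$t$ regime (trivial, since $5K\e^{-t}>1$ for $t<1$, as you note) and the observation that the Davidson--Szarek bound quoted in Section~\ref{sec:th2proof} applies verbatim with $K$ replaced by $L\le 2K$.
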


After establishing the lemma, we can choose $\theta_0$ arbitrarily.  Since $K\geq 1$ and the fact that we always have $M\geq\log^2(M)$, we can replace the middle term above with $Kt/M + t/\sqrt{M}$.  Applying the chaining proposition from the previous section adds another $t^2/M$ factor (with a constant which depends on $\alpha$) to $\gamma(t)$ and a factor of $e^\Delta$ to the failure probability.  This establishes the theorem.

\begin{proof}{\bf of Lemma~\ref{lm:ProjectIncrement}}

We will write the increments of $\mG_\theta$ as a sum of independent random matrices, and then use the matrix Bernstein inequality (Proposition~\ref{prop:matrixbernstein}) to establish a tail bound.  Use
$\vphi_m\in\R^N$ to denote the rows of $\mPhi$, and let $\mS = \Ptha+\Pthb$, and $\mD = \Ptha-\Pthb$.
We have 
\begin{align*}
	\mG_\tha-\mG_\thb  
	&= \sum_{m=1}^M\frac{1}{M}\Ptha - \frac{1}{M}\Pthb - \Ptha\vphi_m\vphi_m^\T\Ptha + \Pthb\vphi_m\vphi_m^\T\Pthb \\
	&= \sum_{m=1}^M \frac{1}{M}\mD - \frac{1}{2}\left(\mS\vphi_m\vphi_m^\T\mD + \mD\vphi_m\vphi_m^\T\mS\right) \\
	&=: \sum_{m=1}^M \mX_m ,
\end{align*}
Since $\E[\vphi\vphi^\T]=M^{-1}\mId$, the terms in the sum above are zero mean.

Let $\mP_{\theta_1,\theta_2}$ be the projection operator onto the $2K$ dimensional subspace $\calS_{\theta_1}\cup\calS_{\theta_2}$, and let 
\[
	R_m = \|\mP_{\theta_1,\theta_2}\vphi_m\|_2, \quad \vvarphi_m = \mP_{\theta_1,\theta_2}\vphi_m/R_m.
\]
We will find it useful to condition on the event that none of the $R_m$ are too large,
\[
	\calE(\xi) = \left\{\sqrt{M}\max_{m=1,\ldots,M}R_m \leq\xi\right\}.
\]
Notice that since the $\vphi_m$ are Gaussian, the random scalar $R_m$ and random vector $\vvarphi_m$ are independent; conditioning on $\calE(\xi)$ is the same as conditioning the distribution of each $\mX_m$ on $\calE_m(\xi)=\{\sqrt{M}R_m\leq\xi\}$.

We break \eqref{eq:projectincrement} into two parts:
\begin{equation}
	\label{eq:condGinc}
	\P{\|\mG_\tha-\mG_\thb\| \geq C\,\gamma(t)} ~\leq~
	\P{\|\mG_\tha-\mG_\thb\| \geq C\,\gamma(t)~\mid~\calE_\xi} + \P{\calE(\xi)^c}.
\end{equation}
%
We will bound the first term in \eqref{eq:condGinc} using Proposition~\ref{prop:matrixbernstein}; the second term is subsequently controlled using Proposition~\ref{prop:gausslip} (see \eqref{eq:GincmaxRm} below).


We can rewrite the $\mX_m$ as
\[
	\mX_m = \frac{1}{M}\mD - \frac{R_m^2}{2}\left(\mS\vvarphi_m\vvarphi_m^\T\mD + \mD\vvarphi_m\vvarphi_m^\T\mS\right).
\]
Conditioned on $\calE(\xi)$, an upper bound $B$ in Proposition~\ref{prop:matrixbernstein} is given by
\begin{align}
	\label{eq:GincB}
	\|\mX_m\| &\leq \frac{\|\mD\|}{M} + \frac{\xi^2}{M}\|\mS\vvarphi_m\|_2\|\mD\vvarphi_m\|_2 
	\leq \frac{2\xi^2+1}{M}\|\mD\| =: B,
\end{align}
since by definition $\|\vvarphi\|_2=1$.

For the conditional variance, since $\E[\vvarphi_m\vvarphi_m^\T] = \frac{1}{2K}\mP_{\theta_1,\theta_2}$,
\[
	\E[\mS\vvarphi_m\vvarphi_m^\T\mD + \mD\vvarphi_m\vvarphi_m^\T\mS] = \frac{1}{K}\mD,
\]
and so
\[
	\E[\mX_m^2|\calE_m(\xi)] = \left(\frac{1}{M^2}-\frac{\E[R_m^2|\calE_m(\xi)]}{MK}\right)\mD^2 + 
	\frac{\E[R_m^4|\calE_m(\xi)]}{4}\E\left[\left(\mS\vvarphi_m\vvarphi_m^\T\mD + \mD\vvarphi_m\vvarphi_m^\T\mS\right)^2\right].
\]
We bound this using the fact that $\E[R_m^4|\calE_m(\xi)] \leq \E[R_m^4] = 4K(K+1)/M^2$,  
\begin{align*}
	\left\|\E[\mX_m^2\mid\calE_m(\xi)]\right\| 
	&\leq \frac{\|\mD\|^2}{M^2} + \frac{\E[R_m^4|\calE_m(\xi)]}{4}\left\|\E\left[\left(\mS\vvarphi_m\vvarphi_m^\T\mD + \mD\vvarphi_m\vvarphi_m^\T\mS\right)^2\right]\right\| \\
	&\leq \frac{\|\mD\|^2}{M^2} + \frac{K(K+1)}{M^2}\left\|\E\left[\left(\mS\vvarphi_m\vvarphi_m^\T\mD + \mD\vvarphi_m\vvarphi_m^\T\mS\right)^2\right]\right\|.
\end{align*}
Since $R_m$ and $\vvarphi_m$ are independent,
\[
	\E\left[\left(\mS\vvarphi_m\vvarphi_m^\T\mD + \mD\vvarphi_m\vvarphi_m^\T\mS\right)^2\right] =
	\frac{M^2}{4K(K+1)}
	\E\left[\left(\mS\vphi_m\vphi_m^\T\mD + \mD\vphi_m\vphi_m^\T\mS\right)^2\right].
\]
Expanding the square and making repeated use of Lemma~\ref{lm:EpptAppt} along with the facts,
\begin{align*}
	\|\mS\|& = \|\Ptha+\Pthb\| \leq 2, \\
	\trace(\mD\mS) &\leq \|\mD\|_F\|\mS\|_F \leq 
	\sqrt{2K}\|\mD\|\sqrt{2K}\|\mS\| \leq 4K\|\mD\|, \\
	\trace(\mD^2) &\leq 2K\|\mD\|^2, \\
	\trace(\mS^2) &\leq 8K,
\end{align*}
results in 
\[
	\left\|\E\left[\left(\mS\vphi_m\vphi_m^\T\mD + \mD\vphi_m\vphi_m^\T\mS\right)^2\right]\right\|
	~\leq~ \frac{32(K+1)}{M^2}\|\mD\|^2.
\]	
Thus
\begin{align*}
	\left\|\sum_{m=0}^M\E[\mX_m^2\mid\calE_m(\xi)]\right\| &\leq
	\frac{8K+9}{M}\|\mD\|^2.
\end{align*}
We now apply Proposition~\ref{prop:matrixbernstein} (matrix Bernstein) with the result above along with \eqref{eq:GincB} to get
\[
	\P{\|\mG_\tha-\mG_\thb\| > C\left(\sqrt{\frac{Kt}{M}}+\frac{\xi^2t}{M} \right)\|\mD\| ~\mid~ \calE(\xi)} ~\leq~
	4K\e^{-t}.
\]

Returning to \eqref{eq:condGinc}, we can combine the bound above along with the following direct consequence of Proposition~\ref{prop:gausslip}:
\begin{equation}
	\label{eq:GincmaxRm}
	\P{\sqrt{M}\max_{m=1,\ldots,M}R_m > \sqrt{2K} + \tau} ~\leq~ M\e^{-\tau^2/2}.
\end{equation}
For and $t,\tau\geq 0$ we have the unconditional bound,
\[
	\P{\|\mG_\tha-\mG_\thb\| > C\left(\sqrt{\frac{Kt}{M}}+\frac{(\sqrt{2K}+\tau)^2t}{M} \right)\|\mD\|}
	~\leq~
	4K\e^{-t} + M\e^{-\tau^2/2}.
\]
Taking $\tau = \sqrt{2t+2\log(M/K)}$ establishes the lemma.

\end{proof}

\subsection{Proof of Theorem~\ref{th:suphperp}}

We follow the same template as in Section~\ref{sec:th2proof}, but now the process $\Pth\mPhi^\T\mPhi\Pth^\perp\vh_0$ is vector-valued.  As before, we have
\[
	\sup_{\theta\in\Theta} \|\Pth\mPhi^\T\mPhi\Pth^\perp\vh_0\|_2 ~\leq~
	\|\mP_{\theta_0}\mPhi^\T\mPhi\mP_{\theta_0}^\perp\vh_0\|_2 ~+~
	\sup_{\theta\in\Theta} 
	\|\Pth\mPhi^\T\mPhi\Pth^\perp\vh_0- \mP_{\theta_0}\mPhi^\T\mPhi\mP_{\theta_0}^\perp\vh_0\|_2
\]
for any fixed $\theta_0$.  We can control the first term above using basic properties of Gaussian random vectors; a bound on the second term will follow from Proposition~\ref{prop:chaining} (with $c=0$) along with Lemma~\ref{lm:perphbound} below, which provides a tail bound on the increment of the process.

With $\theta_0$ fixed, the quantity $\mP_{\theta_0}\mPhi^\T\mPhi\mP_{\theta_0}^\perp\vh_0$ is a random vector formed by taking a random embedding of $\vh_0$ followed by an independent random projection.  Let $\mV$ be a $N\times K$ matrix with orthonormal columns such that $\mP_{\theta_0}=\mV\mV^\T$, and similarly let $\mU$ be a $N\times (N-K)$ orthonormal matrix with $\mP_{\theta_0}^\perp = \mU\mU^\T$.  Then
\begin{equation}
	\label{eq:Vtproj}
	\|\mP_{\theta_0}\mPhi^\T\mPhi\mP_{\theta_0}^\perp\vh_0\|_2 = \|\mV^\T\mPhi^\T\mPhi\mU\vg\|_2,
\end{equation}
where $\vg = \mU^\T\vh_0$; note that $\|\vg\|_2\leq\|\vh_0\|_2=1$.  Since $\mV^\T\mU = \mzero$, the $M\times K$ Gaussian random matrix $\mPhi\mV$ and the $M\times(N-K)$ Gaussian random matrix $\mPhi\mU$ are independent of one another.  Using a standard concentration bound (Proposition~\ref{prop:randomprojfixed} in Section~\ref{sec:probtools}), we have
\[
	\P{\|\mV^\T\mPhi^\T\mPhi\mU\vg\|^2_2 > \frac{K}{M}\left(1 + \max\left(\frac{7u}{K},\sqrt{\frac{7u}{K}}\right)\right)\|\mPhi\mU\vg\|_2} ~\leq ~
	\e^{-u},
\]
and
\[
	\P{\|\mPhi\mU\vg\|_2^2 > 1 + \max\left(\frac{7t}{M},\sqrt{\frac{7t}{M}}\right)} ~\leq~
	\e^{-t}.
\]
Taking $u=t$ in \eqref{eq:Vtproj} yields
\[
	\P{\|\mP_{\theta_0}\mPhi^\T\mPhi\mP_{\theta_0}^\perp\vh_0\|_2 > \gamma(t)} ~\leq~
	2\e^{-t},
	\quad
	\gamma(t) = \sqrt{\frac{K}{M}}\left(1 + \max\left(\frac{7t}{K},\sqrt{\frac{7t}{K}}\right)\right).
\]

For the increment bound, we use the following lemma.
\begin{lemma}
	\label{lm:perphbound}
	Let $\vh_0\in\R^N$ be a fixed vector with $\|\vh_0\|_2=1$.  There exists a constant $C$ such that for any fixed $\theta_1,\theta_2\in\Theta$,
	\begin{equation}
		\label{eq:perphincrement}
		\P{\|\Ptha\mPhi^\T\mPhi\Ptha^\perp\vh_0 - \Pthb\mPhi^\T\mPhi\Pthb^\perp\vh_0\|_2 > C\gamma(t)\|\Ptha-\Pthb\|}
		~\leq~
		4K\e^{-t},
	\end{equation}
	where
	\[
		\gamma(t) = \sqrt{\frac{tK}{M}} + \frac{t\sqrt{K}}{M}.
	\]
\end{lemma}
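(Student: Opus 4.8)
\textbf{Proof proposal for Lemma~\ref{lm:perphbound}.}
The plan is to mirror the proof of Lemma~\ref{lm:ProjectIncrement}: express the increment as a sum of independent, mean-zero random vectors, condition on a high-probability event on which each summand is small, and invoke the matrix Bernstein inequality (Proposition~\ref{prop:matrixbernstein}). Writing $\vphi_1,\dots,\vphi_M\in\R^N$ for the rows of $\mPhi$, so that $\mPhi^\T\mPhi=\sum_m\vphi_m\vphi_m^\T$, the vector inside \eqref{eq:perphincrement} equals $\sum_{m=1}^M\vx_m$ with $\vx_m=\bigl(\Ptha\vphi_m\vphi_m^\T\Ptha^\perp-\Pthb\vphi_m\vphi_m^\T\Pthb^\perp\bigr)\vh_0$. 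Since $\Ptha^\perp-\Pthb^\perp=-\mD$ with $\mD=\Ptha-\Pthb$, one has the algebraic identity $\Ptha\vphi\vphi^\T\Ptha^\perp-\Pthb\vphi\vphi^\T\Pthb^\perp=\mD\vphi\vphi^\T\Ptha^\perp-\Pthb\vphi\vphi^\T\mD$, which lets us write $\vx_m=\mD\vphi_m\,(\vphi_m^\T\Ptha^\perp\vh_0)-\Pthb\vphi_m\,(\vphi_m^\T\mD\vh_0)$, making the factor $\mD$ explicit. Because $\E[\vphi_m\vphi_m^\T]=M^{-1}\mId$ and $\mD\Ptha^\perp=\Pthb\mD$, these summands are mean zero; and since $\mD\vphi_m$ and $\Pthb\vphi_m$ both lie in $\calS_{\theta_1}+\calS_{\theta_2}$, the entire sum is supported on that $2K$-dimensional subspace, which is the source of the factor $K$ in front of $\e^{-t}$.

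For the conditioning, let $\mP_{\theta_1,\theta_2}$ project onto $\calS_{\theta_1}+\calS_{\theta_2}$, put $R_m=\|\mP_{\theta_1,\theta_2}\vphi_m\|_2$, and --- exactly as in Lemma~\ref{lm:ProjectIncrement} --- condition on $\calE(\xi)=\{\sqrt M\max_m R_m\le\xi\}$, which by Proposition~\ref{prop:gausslip} fails with probability at most $M\e^{-\tau^2/2}$ for $\xi=\sqrt{2K}+\tau$. The additional ingredient here is that the two scalars $\vphi_m^\T\Ptha^\perp\vh_0$ and $\vphi_m^\T\mD\vh_0$ are inner products of $\vphi_m$ with \emph{fixed} vectors of norm at most $1$ and $\|\mD\|$, so again by Proposition~\ref{prop:gausslip} the events $\{\sqrt M\max_m|\vphi_m^\T\Ptha^\perp\vh_0|\le\eta\}$ and $\{\sqrt M\max_m|\vphi_m^\T\mD\vh_0|\le\eta\|\mD\|\}$ each fail with probability at most $M\e^{-\eta^2/2}$. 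On the intersection of these three events, $\|\mD\vphi_m\|_2\le\|\mD\|R_m\le\|\mD\|\xi/\sqrt M$ and $\|\Pthb\vphi_m\|_2\le\xi/\sqrt M$, so $\|\vx_m\|_2\le 2\|\mD\|\xi\eta/M=:B$.

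The conditional variance is handled as in Lemma~\ref{lm:ProjectIncrement}: one bounds $\bigl\|\sum_m\E[\vx_m\vx_m^\T]\bigr\|$ and $\sum_m\E\|\vx_m\|_2^2$ using $\E[\vphi_m\vphi_m^\T]=M^{-1}\mId$, the moment identity $\E R_m^4=4K(K+1)/M^2$, and Lemma~\ref{lm:EpptAppt}, both coming out as $O(\|\mD\|^2K/M)=:\sigma^2$. Matrix Bernstein on the good event then gives
\[
\P{\Bigl\|\sum_m\vx_m\Bigr\|_2 > C\Bigl(\sqrt{\tfrac{Kt}{M}}+\tfrac{\xi\eta t}{M}\Bigr)\|\mD\| ~\Bigm|~ \mathrm{good}} ~\le~ 4K\,\e^{-t}.
\]
Removing the conditioning through the three failure bounds with $\tau,\eta\sim\sqrt{t+\log M}$ (so each contributes at most a constant times $\e^{-t}$), and absorbing the resulting $\log M$ contributions in $\xi\eta$ using $M\ge K\ge 1$ --- just as in the passage from Lemma~\ref{lm:ProjectIncrement} to Theorem~\ref{th:ProjectEmbed} --- collapses the bound to $C\gamma(t)\|\mD\|$ with $\gamma(t)=\sqrt{tK/M}+t\sqrt K/M$, which is \eqref{eq:perphincrement}. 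This increment bound is exactly what Proposition~\ref{prop:chaining} (with $c=0$) needs in the proof of Theorem~\ref{th:suphperp}.

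The step I expect to be the main obstacle is keeping the Bernstein boundedness parameter $B$ small enough: the naive estimate $|\vphi_m^\T\mD\vh_0|\le\|\mD\|R_m$ costs an extra $\sqrt K$ and would produce a spurious $Kt/M$ term in place of $t\sqrt K/M$, so it is essential to exploit that $\Ptha^\perp\vh_0$ and $\mD\vh_0$ are fixed low-norm vectors and bound their projections onto $\vphi_m$ directly --- equivalently, one could use a sub-exponential-tail version of Bernstein built on the typical scale $\|\vx_m\|_2\approx\|\mD\|\sqrt K/M$ rather than a worst-case bound. Verifying that the $\log M$ terms thereby introduced are genuinely dominated by the clean $\gamma(t)$ in the parameter regime where the estimate is meaningful is the remaining bookkeeping.
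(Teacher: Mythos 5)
Your overall architecture matches the paper's: write the increment as a sum of $M$ independent mean-zero random vectors supported on the $2K$-dimensional subspace $\calS_{\theta_1}+\calS_{\theta_2}$, extract the factor $\mD=\Ptha-\Pthb$ algebraically, bound the variance at $O(K\|\mD\|^2/M)$ via $\E[\vphi\vphi^\T\mA\vphi\vphi^\T]=\mA+\mA^\T+\trace(\mA)\mId$, and close with matrix Bernstein in its rectangular form. Your algebraic identity $\Ptha\vphi\vphi^\T\Ptha^\perp-\Pthb\vphi\vphi^\T\Pthb^\perp=\mD\vphi\vphi^\T\Ptha^\perp-\Pthb\vphi\vphi^\T\mD$ is correct (the paper uses the symmetrized variant $\tfrac12(\mD\vphi\vphi^\T\mT-\mS\vphi\vphi^\T\mD)$ with $\mS=\Ptha+\Pthb$, $\mT=\Ptha^\perp+\Pthb^\perp$, which is equivalent and slightly cleaner for the variance computation), and the variance and dimension bookkeeping are fine.

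The genuine gap is in how you obtain the Bernstein boundedness parameter $B$. Truncating on $\calE(\xi)$ together with the two scalar events forces $\xi\sim\sqrt{K}+\sqrt{t+\log M}$ and $\eta\sim\sqrt{t+\log M}$ in order to keep each failure probability at $\e^{-t}$, so your $B=2\|\mD\|\xi\eta/M$ carries a factor $(\sqrt{K}+\sqrt{t+\log M})\sqrt{t+\log M}$ that \emph{grows with} $t$. The resulting deviation term $Bt/M$ is of order $\|\mD\|\bigl(\sqrt{K}\,t^{3/2}+t^{2}+t\log M\bigr)/M$, and the ratio of this to the claimed $t\sqrt{K}/M$ is $\sqrt{t}+t/\sqrt{K}$, which is unbounded in $t$; no appeal to $M\geq K\geq 1$ absorbs it. (This is exactly why the paper's Lemma~\ref{lm:ProjectIncrement}, which does use this truncation device, ends up with an honest $t^{2}/M$ term in its $\gamma$, whereas Lemma~\ref{lm:perphbound} does not.) The fix is the parenthetical alternative you mention but do not execute: the paper bounds the Orlicz norm $\|\vx_m\|_{\psi_1}\lesssim\sqrt{K}\|\mD\|/M$ directly, by combining the Gaussian tails of the scalar $\vphi^\T\mT\vh_0$ (variance $O(1)$, not $O(K)$) and of the vector $\mD\vphi$ to get a subexponential tail at scale $\sqrt{K}\|\mD\|/M$, and then invokes the $\psi_1$-version of matrix Bernstein \eqref{eq:psi1berngamma}; since $2B\sqrt{M}/\sigma=O(1)$ here, the logarithmic correction is a constant and the deviation term is genuinely $t\sqrt{K}/M$. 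Without that step your argument proves a weaker increment bound than \eqref{eq:perphincrement}, which would propagate extra $t^{2}/M$-type terms into Theorem~\ref{th:suphperp}.
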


\begin{proof}

We can write $\Ptha\mPhi^\T\mPhi\Ptha^\perp\vh_0 - \Pthb\mPhi^\T\mPhi\Pthb^\perp\vh_0$ as the following sum of independent random vectors,
\[
	\Ptha\mPhi^\T\mPhi\Ptha^\perp\vh_0 - \Pthb\mPhi^\T\mPhi\Pthb^\perp\vh_0
	~=~
	\sum_{m=1}^M (\Ptha\vphi_m\vphi_m^\T\Ptha^\perp - \Pthb\vphi_m\vphi_m^\T\Pthb^\perp)\vh_0
	~=:~
	\sum_{m=1}^M \vx_m.
\]
The $\vx_m$ above are iid; for convenience, we define	
\begin{equation}
	\label{eq:xdef}
	\vx = \frac{1}{2}\left(\mD\vphi\vphi^\T\mT\vh_0 - \mS\vphi\vphi^\T\mD\vh_0\right),
\end{equation}
where $\vphi\sim\mathrm{Normal}(\vzero,\mId)$ and
\begin{align*}
	\mD = \Ptha - \Pthb = \Pthb^\perp-\Ptha^\perp, 
	\quad
	\mS = \Ptha + \Pthb,
	\quad
	\mT = \Ptha^\perp + \Pthb^\perp.
\end{align*}
Note that $M\E[\|\vx_m\|_2^2] = \E[\|\vx\|_2^2]$.  We have
\begin{align}
	\label{eq:xnorm}
	\|\vx\|_2^2 
	&= \frac{1}{4} \vh_0^\T\left(
	\mT\vphi\vphi^\T\mD^2\vphi\vphi^\T\mT - 
	2\mT\vphi\vphi^\T\mD\mS\vphi\vphi^\T\mD +
	\mD\vphi\vphi^\T\mS^2\vphi\vphi^\T\mD \right)\vh_0.
\end{align}
We will treat each of these three terms separately through applications of Lemma~\ref{lm:EpptAppt}.  For the first term, we apply
\begin{align*}
	\E[\vh_0^\T\mT\vphi\vphi^\T\mD^2\vphi\vphi^\T\mT\vh_0] &=
	\vh_0^\T\mT(2\mD^2 + \trace(\mD^2)\mId)\mT\vh_0 \\
	&\leq 2\|\mD\|^2\|\mT\vh_0\|_2^2 + \trace(\mD^2)\|\mT\vh_0\|_2^2 \\
	&\leq 8(K+1)\|\Ptha-\Pthb\|^2,
\end{align*}
since $\|\mT\vh_0\|_2^2\leq 4\|\vh\|_2^2 = 4$.  For the second term in \eqref{eq:xnorm},
\begin{align*}
	\left|\E[\vh_0^\T\mT\vphi\vphi^\T\mD\mS\vphi\vphi^\T\mD\vh_0]\right| &=
	\left|\vh_0^\T\mT(\mD\mS + \mS\mD + \trace(\mD\mS)\mId)\mD\vh_0\right| \\
	&\leq \left(2\|\mS\|\,\|\mD\| + \trace(\mD\mS)\right)\|\mT\vh_0\|_2\|\mD\vh_0\|_2 \\
	&\leq 8(K+1)\|\Ptha-\Pthb\|^2.
\end{align*}
Finally for the third term in \eqref{eq:xnorm},
\begin{align*}
	\E[\vh_0^\T\mD\vphi\vphi^\T\mS^2\vphi\vphi^\T\mD\vh_0] &=
	\vh_0^\T\mD(2\mS^2 + \trace(\mS^2)\mId)\mD\vh_0 \\
	&\leq (2\|\mS\|^2 + 8K)\|\mD\vh_0\|_2^2 \\
	&\leq 8(K+1)\|\Ptha-\Pthb\|^2.
\end{align*}
Collecting these results, we can take
\begin{equation}
	\label{eq:hperpsigma}
	\sigma^2 
	~=~ 
	\left\|\sum_{m=1}^M\E[\vx_m^\T\vx_m]\right\| 
	~=~
	M\E[\|\vx_m\|_2^2]
	~=~
	\frac{1}{M}\E[\|\vx\|_2^2]
	~\leq~
	\frac{8(K+1)}{M}\,\|\Ptha-\Pthb\|^2.
\end{equation}

Next we need to bound the $\psi_1$ norm of the $\vx_m$; we start with
\[
	\|\vx_m\|_{\psi_1} = \frac{\|\vx\|_{\psi_1}}{M}.
\]
We will bound each part of $\vx$ in \eqref{eq:xdef} separately.  For the first term, we note that $\vphi^\T\mT\vh_0$ is a Gaussian random scalar with variance $\|\mT\vh_0\|_2^2$, and so
\[
	\P{|\vphi^\T\mT\vh_0| > u} 
	~\leq~ 
	\exp\left(-\frac{u^2}{2\|\mT\vh_0\|_2^2}\right) 
	~\leq~
	\exp\left(-\frac{u^2}{8}\right).
\] 
Also, $\mD\vphi$ is a Gaussian random vector, and so we have the (standard) bound
\begin{align*}
	\P{\|\mD\vphi\|_2 > u} &\leq~ 2\exp\left(-\frac{u^2}{8\|\mD\|_F^2}\right) 
	~\leq~ 2\exp\left(-\frac{u^2}{16K\|\Ptha-\Pthb\|^2}\right).
\end{align*}
Thus for any $t,u\geq 0$,
\begin{align*}
	\P{\|(\vphi^\T\mT\vh_0)\mD\vphi\|_2 > t} 
	&\leq \exp\left(-\frac{u^2}{8}\right) +
	2\exp\left(-\frac{t^2}{u^2 16K\|\Ptha-\Pthb\|^2}\right).
\end{align*}
Taking $u=t^{1/2}(2K\|\Ptha-\Pthb\|^2)^{-1/4}$ above yields
\begin{align*}
	\P{\|(\vphi^\T\mT\vh_0)\mD\vphi\|_2 > t} 
	&\leq 3\exp\left(-\frac{t}{8\sqrt{2K}\|\Ptha-\Pthb\|}\right),
\end{align*}
and so
\begin{align}
	\label{eq:xpsi1a}
	\|(\vphi^\T\mT\vh_0)\mD\vphi\|_{\psi_1} &\leq 32\sqrt{2K}\|\Ptha-\Pthb\|.
\end{align}
We use similar computations for the second part of $\vx$ in \eqref{eq:xdef}:
\begin{align*}
	\P{|\vphi^\T\mD\vh_0| > u} &\leq 
	\exp\left(-\frac{u^2}{2\|\Ptha-\Pthb\|^2}\right),
	\quad
	\P{\|\mS\vphi\|_2 > u} ~\leq~
	2\exp\left(-\frac{u^2}{64K}\right),
\end{align*}
and so
\begin{align*}
	\P{\|(\vphi^\T\mD\vh_0)\mS\vphi\|_2 > t} &\leq 
	\P{|\vphi^\T\mD\vh_0| > \frac{\sqrt{\|\Ptha-\Pthb\| t}}{2(2K)^{1/4}}} +
	\P{\|\mS\vphi\|_2 > \frac{2(2K)^{1/4}\sqrt{t}}{\sqrt{\|\Ptha-\Pthb\|}}} \\
	&\leq 3\exp\left(-\frac{t}{8\sqrt{2K}\|\Ptha-\Pthb\|}\right),
\end{align*}
and so
\begin{align}
	\label{eq:xpsi1b}
	\|(\vphi^\T\mD\vh_0)\mS\vphi\|_{\psi_1} &\leq 32\sqrt{2K}\|\Ptha-\Pthb\|.
\end{align}

Combining \eqref{eq:xpsi1a} and \eqref{eq:xpsi1b} gives us the upper bound 
\begin{align*}
	B ~=~ 
	\max_{m}\|\vx_m\|_{\psi_1} 
	~=~
	\|\vx\|_{\psi_1}
	~\leq~
	\frac{64\sqrt{2K}}{M}\,\|\Ptha-\Pthb\|.
\end{align*}
The lemma follows by applying Proposition~\ref{prop:matrixbernstein} with this $B$ and $\sigma^2$ as in \eqref{eq:hperpsigma}.  Since the $\vx_m$ are all in the union of the column spaces of $\Ptha$ and $\Pthb$, we can take $K_1=2K$ and $K_2=1$ in Proposition~\ref{prop:matrixbernstein}.

\end{proof}

\section{Probability Tools}
\label{sec:probtools}


The following lemma was useful in establishing the increment bounds in Lemmas~\ref{lm:ProjectIncrement} and \ref{lm:perphbound}.  It can be established through a straightforward calculation.
\begin{lemma}
	\label{lm:EpptAppt}
	Let $\vphi\in\R^N$ be a Gaussian random vector, $\vphi\sim\mathrm{Normal}(0,\mId)$, and let $\mA$ be an arbitrary $N\times N$ matrix.  Then
	\[
		\E\left[\vphi\vphi^\T\mA\vphi\vphi^\T\right] = \mA + \mA^\T + \trace(\mA)\,\mId.
	\]
\end{lemma}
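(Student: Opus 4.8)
The final statement is Lemma~\ref{lm:EpptAppt}, which says that for a standard Gaussian vector $\vphi\sim\mathrm{Normal}(0,\mId)$ and arbitrary $N\times N$ matrix $\mA$, we have $\E[\vphi\vphi^\T\mA\vphi\vphi^\T] = \mA + \mA^\T + \trace(\mA)\mId$.

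The plan is to compute this entrywise using the fourth-moment (isoserial) structure of Gaussian variables. First I would write the $(i,j)$ entry of $\E[\vphi\vphi^\T\mA\vphi\vphi^\T]$ as $\sum_{k,\ell} A_{k\ell}\,\E[\varphi_i\varphi_k\varphi_\ell\varphi_j]$. Then I would invoke Isserlis/Wick's theorem for the fourth moment of jointly Gaussian (here iid standard normal) variables: $\E[\varphi_i\varphi_k\varphi_\ell\varphi_j] = \delta_{ik}\delta_{\ell j} + \delta_{i\ell}\delta_{kj} + \delta_{ij}\delta_{k\ell}$. Substituting this in, the three terms produce $A_{ij}$, $A_{ji}$, and $\delta_{ij}\sum_k A_{kk} = \delta_{ij}\trace(\mA)$ respectively, which is exactly the $(i,j)$ entry of $\mA + \mA^\T + \trace(\mA)\mId$. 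Since $i,j$ were arbitrary, the matrix identity follows.

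An alternative route, if one prefers to avoid quoting Isserlis' theorem, is to split into cases by hand: the expectation $\E[\varphi_i\varphi_k\varphi_\ell\varphi_j]$ is nonzero only when the four indices pair up, and one checks the values $3$ (all equal), $1$ (two distinct pairs), and $0$ (an unpaired index) directly from $\E[\varphi^2]=1$, $\E[\varphi^4]=3$, independence, and mean zero; reassembling gives the same three Kronecker-delta terms. Either way the computation is completely routine.

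There is essentially no obstacle here — the only thing to be careful about is bookkeeping the three pairings correctly (in particular not double-counting when $\mA$ is symmetric, where the answer collapses to $2\mA + \trace(\mA)\mId$, consistent with the formula). I would simply present the entrywise calculation with the fourth-moment identity and conclude.
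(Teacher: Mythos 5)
Your proof is correct: the entrywise computation via Isserlis/Wick's fourth-moment identity $\E[\varphi_i\varphi_k\varphi_\ell\varphi_j]=\delta_{ik}\delta_{\ell j}+\delta_{i\ell}\delta_{kj}+\delta_{ij}\delta_{k\ell}$ yields exactly $A_{ij}+A_{ji}+\delta_{ij}\trace(\mA)$, and your sanity check on the diagonal case ($\E[\varphi^4]=3$) confirms the bookkeeping. The paper offers no details beyond calling this "a straightforward calculation," and your argument is precisely that calculation, so there is nothing to add.
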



We also make use of the following standard concentration results for the norm of a Gaussian random vector.  Nice expositions of these facts can be found in (for example) \cite[Prop.\ 2.18]{ledoux2001concentration} or \cite[A.2.1]{vandervaart96we}.
\begin{proposition}
	\label{prop:gausslip}
	Let $\vx\in\R^K$ be distributed $\vx\sim\mathrm{Normal}(\mzero,\mId)$, and let $\mA$ be a matrix with $N$ columns.  Then
	\[
		\P{\|\mA\vx\|_2 \geq \|\mA\|_F + t} ~\leq~
		\exp\left(-\frac{t^2}{2\|\mA\|^2}\right).
	\]
	and 
	\[
		\P{\|\mA\vx\|_2 \geq \lambda} ~\leq~
		2\exp\left(-\frac{\lambda^2}{8\|\mA\|_F^2}\right).
	\]
\end{proposition}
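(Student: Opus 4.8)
The plan is to obtain both inequalities from the Gaussian concentration inequality for Lipschitz functions, which is exactly the content of the cited references (Ledoux, van der Vaart--Wellner). The starting point is the observation that the map $\vx\mapsto f(\vx):=\|\mA\vx\|_2$ is Lipschitz with constant equal to the operator norm $\|\mA\|$: by the reverse triangle inequality, $|f(\vx)-f(\vy)|\le\|\mA(\vx-\vy)\|_2\le\|\mA\|\,\|\vx-\vy\|_2$ for all $\vx,\vy$. Thus $f$ is a $\|\mA\|$-Lipschitz function of a standard Gaussian vector, and the Gaussian concentration inequality yields $\P{f(\vx)\ge\E[f(\vx)]+t}\le\exp\!\left(-t^2/(2\|\mA\|^2)\right)$ for all $t\ge0$.

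Next I would control the mean. By Jensen's inequality, $\E[f(\vx)]\le\big(\E[\|\mA\vx\|_2^2]\big)^{1/2}=\big(\trace(\mA^\T\mA)\big)^{1/2}=\|\mA\|_F$, using $\E[\vx\vx^\T]=\mId$. Substituting this bound on the mean into the concentration estimate above immediately gives the first displayed inequality, $\P{\|\mA\vx\|_2\ge\|\mA\|_F+t}\le\exp\!\left(-t^2/(2\|\mA\|^2)\right)$.

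For the second inequality I would argue by cases on the size of $\lambda$. If $\lambda\ge 2\|\mA\|_F$, then applying the first inequality with $t=\lambda-\|\mA\|_F\ge\lambda/2$ and using $\|\mA\|\le\|\mA\|_F$ gives $\P{\|\mA\vx\|_2\ge\lambda}\le\exp\!\left(-(\lambda-\|\mA\|_F)^2/(2\|\mA\|_F^2)\right)\le\exp\!\left(-\lambda^2/(8\|\mA\|_F^2)\right)$. If instead $\lambda<2\|\mA\|_F$, then $\lambda^2/(8\|\mA\|_F^2)<1/2$, so $2\exp\!\left(-\lambda^2/(8\|\mA\|_F^2)\right)>2e^{-1/2}>1\ge\P{\|\mA\vx\|_2\ge\lambda}$, and the bound holds trivially. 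Combining the two cases completes the argument.

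There is no serious obstacle here: this is a textbook concentration estimate. The only points that need a little care are keeping the operator norm $\|\mA\|$ and the Frobenius norm $\|\mA\|_F$ straight (the former controls the Lipschitz constant and hence the sub-Gaussian rate, the latter the mean), and checking that the crude factor of $2$ in the second inequality is precisely what is needed to absorb the regime where $\lambda$ is comparable to or smaller than $\|\mA\|_F$.
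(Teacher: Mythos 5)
Your proof is correct. The paper does not actually prove Proposition~\ref{prop:gausslip}; it cites it as a standard fact from the references (Ledoux; van der Vaart--Wellner), and your argument --- Gaussian concentration for the $\|\mA\|$-Lipschitz function $\vx\mapsto\|\mA\vx\|_2$, the Jensen bound $\E\|\mA\vx\|_2\leq\|\mA\|_F$, and the two-case analysis (using $\|\mA\|\leq\|\mA\|_F$ for $\lambda\geq2\|\mA\|_F$, and triviality of the bound when $2e^{-\lambda^2/(8\|\mA\|_F^2)}>1$) --- is precisely the standard derivation those references supply.
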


Closely related is the following concentration result for the norm of a random matrix applied to a fixed vector.
\begin{proposition}
	\label{prop:randomprojfixed}
	Let $\vx\in\R^N$ be a fixed vector with $\|\vx\|_2=1$, and let $\mPhi$ be an $M\times N$ matrix whose entries are standard independent Gaussian random variables.  Then
	\[
		\P{\|\mPhi\vx\|_2^2 > M + \max\left(C t,\sqrt{CM t}\right)} ~\leq~\e^{-t},
	\]
	for a constant $C \leq 2/\log(\e/2)\approx 6.52$.
\end{proposition}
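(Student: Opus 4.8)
The plan is to reduce $\|\mPhi\vx\|_2^2$ to a chi-squared random variable and then apply a sharp Chernoff bound, whose optimal exponent is exactly $x - \log(1+x)$; the stated constant will fall out of a single-variable inequality for this function. Since $\|\vx\|_2 = 1$ and the rows $\vphi_m$ of $\mPhi$ are independent standard Gaussian vectors, each coordinate $(\mPhi\vx)_m = \vphi_m^\T\vx$ is a centered Gaussian scalar of variance $\|\vx\|_2^2 = 1$, and the $M$ coordinates are mutually independent. Hence $Z := \|\mPhi\vx\|_2^2$ is $\chi^2_M$-distributed, and the proposition becomes a statement about the upper tail of $\chi^2_M$.

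First I would use the moment generating function $\E[\e^{\lambda Z}] = (1-2\lambda)^{-M/2}$, valid for $0 < \lambda < 1/2$, together with Markov's inequality, to obtain that for every $s > 0$
\[
	\P{Z > M + s} ~\leq~ \e^{-\lambda(M+s)}\,(1-2\lambda)^{-M/2}.
\]
Minimizing the right-hand side over $\lambda$ — the minimizer is $\lambda^\star = \tfrac12\cdot\tfrac{s}{M+s}$, which lies in $(0,1/2)$ since $s > 0$ — yields the clean bound
\[
	\P{Z > M + s} ~\leq~ \exp\left(-\frac{M}{2}\left(\frac{s}{M} - \log\left(1 + \frac{s}{M}\right)\right)\right).
\]

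The heart of the argument is to lower-bound $f(x) := x - \log(1+x)$ by a multiple of $\min(x, x^2)$. Using the elementary inequalities $\tfrac{x}{1+x} \leq \log(1+x) \leq x - \tfrac{x^2}{2(1+x)}$ (each proved by checking the value and derivative at $x = 0$), I would show that $x \mapsto f(x)/x^2$ is nonincreasing and $x \mapsto f(x)/x$ is nondecreasing on $(0,\infty)$. Both expressions equal $1 - \log 2 = \log(\e/2)$ at $x = 1$, so $f(x) \geq \log(\e/2)\,x^2$ for $0 < x \leq 1$ and $f(x) \geq \log(\e/2)\,x$ for $x \geq 1$; that is, $f(x) \geq \log(\e/2)\,\min(x,x^2)$ for all $x > 0$. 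Taking $s = \max(Ct,\sqrt{CMt})$ with $C = 2/\log(\e/2)$ and $x = s/M$, and noting that $s = Ct$ when $Ct \geq M$ (so $s/M \geq 1$) while $s = \sqrt{CMt}$ when $Ct \leq M$ (so $s/M \leq 1$), one checks in both cases that $\tfrac{M}{2}\log(\e/2)\,\min(s/M,(s/M)^2) = t$. Chaining this with the Chernoff bound above gives $\P{Z > M + \max(Ct,\sqrt{CMt})} \leq \e^{-t}$, which is the claim.

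The only step that requires any real computation is the two monotonicity statements for $f$, which are routine; the point to keep track of is that the constant $C = 2/\log(\e/2) \approx 6.52$ is exactly $2\,\sup_{x>0}\min(x,x^2)/f(x)$, with the supremum attained at $x = 1$ (a cruder route through the Laurent--Massart bound $\P{\chi^2_M \geq M + 2\sqrt{Mt} + 2t} \leq \e^{-t}$ would give a larger constant). The boundary case $s = 0$, i.e.\ $t = 0$, is trivial, since then the asserted bound reads $\P{Z > M} \leq 1$.
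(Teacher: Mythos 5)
Your proposal is correct and follows essentially the same route as the paper: a Chernoff bound for the $\chi^2_M$ variable $\|\mPhi\vx\|_2^2$ of the form $\P{\|\mPhi\vx\|_2^2 > M(1+\lambda)} \leq ((1+\lambda)\e^{-\lambda})^{M/2}$, the elementary inequality $\lambda - \log(1+\lambda) \geq \log(\e/2)\min(\lambda,\lambda^2)$, and the substitution $M\lambda = \max(Ct,\sqrt{CMt})$ with $C = 2/\log(\e/2)$. You merely derive the Chernoff bound and the pointwise inequality in more detail than the paper, which cites them as standard.
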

\begin{proof}
	The quantity $\|\mPhi\vx\|_2^2$ is a sum of $M$ independent chi-square random variables.  A standard Chernoff bound (see, for example, \cite[Chapter 1]{vempala04ra}) gives us
	\begin{align*}
		\P{\|\mPhi\vx\|_2^2 > M(1 + \lambda)} &\leq ((1+\lambda)\e^{-\lambda})^{M/2}.
	\end{align*}
	The proposition then follows by applying the inequality
	\[
		(1+\lambda)\e^{-\lambda}~\leq~ \e^{-\min(\lambda,\lambda^2)\log(\e/2)},
		\quad \lambda\geq 0,
	\]
	and taking $t=\min(\lambda,\lambda^2)M/C ~\Rightarrow~ M\lambda = \max(Ct,\sqrt{CMt})$ with $C=2/\log(\e/2)$.
\end{proof}

The Matrix Bernstein inequality, stated below, gives us a tight way to estimate the size of a sum of independent random matrices.  An early result for this type of bound appeared in \cite{ahlswede02st} and was greatly refined in \cite{tropp12us}.
\begin{proposition}[Matrix Bernstein]
	\label{prop:matrixbernstein}
	Let $\mX_1,\ldots,\mX_M$ be independent self-adjoint random $K\times K$ matrices with $\E[\mX_m]=\mzero$ and
	\[
		\|\mX_m\|\leq B.
	\]
	Then there exists a universal constant $C_B\leq 4$ such that for all $t\geq 0$,
	\begin{equation}
		\label{eq:bernbound}
		\P{\left\|\sum_{m=1}^M\mX_m\right\|\geq C_B\gamma(t)}
		~\leq~ 2K\e^{-t},
	\end{equation}
	where
	\begin{equation}
		\gamma(t) = \max\left\{\sigma\sqrt{t},~2Bt\right\},
		\qquad
		\sigma^2 = \left\|\sum_{m=1}^M\E[\mX_m^2]\right\|.
	\end{equation}
	For non-symmetric $K_1\times K_2$ matrices $\mX_m$, we have the similar bound
	\[
		\P{\left\|\sum_{m=1}^M\mX_m\right\|\geq C_B\gamma(t)}
		~\leq~ 2(K_1+K_2)\e^{-t},
	\] 
	for the same $\gamma(t)$ with
	\[
		\sigma^2 =
		\max\left\{\left\|\sum_{m=1}^M\E[\mX_m^\T\mX_m]\right\|,
		\left\|\sum_{m=1}^M\E[\mX_m\mX_m^\T]\right\|\right\}.
	\]
\end{proposition}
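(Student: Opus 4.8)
The plan is to obtain this statement as a repackaging of the now-standard matrix Bernstein inequality from \cite{tropp12us} (itself a refinement of \cite{ahlswede02st}). In the self-adjoint case, that result asserts that for independent self-adjoint $K\times K$ matrices $\mX_m$ with $\E[\mX_m]=\mzero$ and $\|\mX_m\|\leq B$ almost surely, writing $\sigma^2=\left\|\sum_m\E[\mX_m^2]\right\|$, one has the one-sided bound
\[
  \P{\lambda_{\max}\!\left(\textstyle\sum_{m=1}^M\mX_m\right)\geq s}~\leq~K\exp\!\left(\frac{-s^2/2}{\sigma^2+Bs/3}\right),\qquad s\geq 0.
\]
Applying this to $\{\mX_m\}$ and to $\{-\mX_m\}$ and taking a union bound controls $\left\|\sum_m\mX_m\right\|$ with an extra factor of $2$ in front, which accounts for the $2K$ in \eqref{eq:bernbound}.

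Next I would choose the threshold so the exponent equals $-t$. It is enough to pick $s$ with $s^2/2\geq t\big(\sigma^2+Bs/3\big)$, i.e.\ $s^2-\tfrac{2Bt}{3}s-2\sigma^2 t\geq 0$; this holds for all $s$ at least the larger root $s_\star=\tfrac{Bt}{3}+\sqrt{\tfrac{B^2t^2}{9}+2\sigma^2 t}$. Using $\sqrt{a+b}\leq\sqrt a+\sqrt b$ gives $s_\star\leq\tfrac{2Bt}{3}+\sqrt2\,\sigma\sqrt t$, and since each summand is dominated by $\gamma(t)=\max\{\sigma\sqrt t,\,2Bt\}$ (the first because $\tfrac{2Bt}{3}\leq\tfrac13\cdot 2Bt$, the second because $\sqrt2\,\sigma\sqrt t\leq\sqrt2\,\gamma(t)$), we get $s_\star\leq(\tfrac13+\sqrt2)\gamma(t)$. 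Hence any $C_B\geq\tfrac13+\sqrt2\approx 1.75$ — in particular $C_B=4$ — works: taking $s=C_B\gamma(t)\geq s_\star$ in the display above yields $\P{\left\|\sum_m\mX_m\right\|\geq C_B\gamma(t)}\leq 2K\e^{-t}$.

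For the non-symmetric $K_1\times K_2$ case I would use the Hermitian dilation: set $\mathcal{X}_m=\begin{bmatrix}\mzero&\mX_m\\ \mX_m^\T&\mzero\end{bmatrix}$, which is self-adjoint of size $(K_1+K_2)\times(K_1+K_2)$, has $\E[\mathcal{X}_m]=\mzero$, satisfies $\|\mathcal{X}_m\|=\|\mX_m\|\leq B$ and $\left\|\sum_m\mathcal{X}_m\right\|=\left\|\sum_m\mX_m\right\|$, and has $\mathcal{X}_m^2=\begin{bmatrix}\mX_m\mX_m^\T&\mzero\\ \mzero&\mX_m^\T\mX_m\end{bmatrix}$, so that $\left\|\sum_m\E[\mathcal{X}_m^2]\right\|$ equals exactly the $\sigma^2$ in the statement (the norm of a block-diagonal matrix being the larger of the block norms). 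Applying the self-adjoint bound just established to $\{\mathcal{X}_m\}$ then gives the claimed inequality with $2(K_1+K_2)$ in front.

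There is no genuine difficulty here; the content is entirely in the cited inequality, and the rest is bookkeeping. The only points to watch are matching whatever exact constants appear in the version of matrix Bernstein one invokes (the precise coefficient on $Bs$ in the denominator of the exponent, and whether its dimensional prefactor is $K$ or $2K$) and confirming the crude value $C_B=4$ dominates — both of which the inequalities above accommodate with substantial slack.
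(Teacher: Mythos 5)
Your derivation is correct, and it matches the paper's (implicit) approach: the paper offers no proof of this proposition, simply citing \cite{ahlswede02st,tropp12us}, and your argument is exactly the standard bookkeeping needed to pass from Tropp's one-sided exponential form to the stated two-sided deviation bound, including the union bound giving the factor $2K$, the quadratic-root threshold showing $C_B=\tfrac13+\sqrt2<4$ suffices, and the Hermitian dilation for the rectangular case.
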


In our proof of Theorem~\ref{th:suphperp}, we find it slightly more convenient to use a version of the matrix Bernstein inequality for matrices that have norms which are subexponential (rather than bounded with probability $1$).  A convenient way to characterize this property is using the Orlicz norm.
\begin{definition}
	The Orlicz-$\psi_1$ norm of a positive random variable $X$ is defined as
\[
	\|X\|_{\psi_1} = \inf_{u\geq 0}\left\{ u ~:~ \E\left[\e^{X/u}\right]\leq 2\right\}.
\]
For a random vector $\vx$, we write $\|\vx\|_{\psi_1}$ for the Orlicz-$\psi_1$ norm of $\|\vx\|_2$; for a random matrix $\mX$, we write $\|\mX\|_{\psi_1}$ for the Orlicz-$\psi_1$ norm of its operator norm $\|\mX\|$.
\end{definition}
If $X$ is subexponential,
\begin{equation}
	\label{eq:subexp}
	\P{X > u} ~\leq~ \alpha\,\e^{-\beta u},
	\quad\text{for some } \alpha,\beta>0,
\end{equation}
then it has finite $\psi_1$ norm.  In fact, a straightforward calculation shows that for $X$ obeying \eqref{eq:subexp},
\[
	\E\left[\e^{X/u}\right] 
	~\leq~
	\left(1+\frac{1}{\beta u - 1}\right)\alpha^{1/\beta u},
\]
and so
\begin{equation}
	\label{eq:psi1bounds}
	\|X\|_{\psi_1} ~\leq~
	\begin{cases}
		\frac{\alpha+1}{\beta} & \text{for all}~\alpha\geq 1, \\
		\frac{2\log\alpha + 1}{\beta} &\text{for all}~\alpha\geq 3.6,\\
		\frac{2\log\alpha}{\beta} &\text{for all}~ \alpha \geq 18.
	\end{cases}
\end{equation}

In Proposition \ref{prop:matrixbernstein}, we can replace the condition that the $\|\mX\|\leq B$ with the condition that the $\psi_1$-norms can be uniformly bounded,
\[
	\|\mX_m\|_{\psi_1}\leq B, \quad m = 1,\ldots,M.
\]
Then \eqref{eq:bernbound} holds with (see \cite[Sec.\ 2.2]{koltchinskii12vo})
\begin{equation}
	\label{eq:psi1berngamma}
	\gamma(t) = \max\left\{\sigma\cdot\sqrt{t},~2B\log(2B\sqrt{M}/\sigma)\cdot t\right\},
\end{equation}



\newpage

\appendix


\section{Proofs of Lemmas~\ref{lm:sobolev_shift} and \ref{lm:tv_shift}}

Assume without loss of generality that $\theta_2\geq\theta_1$, set $\theta_d = \theta_2-\theta_1$, and notice that
\[
	\|v(t-\theta_1) - v(t-\theta_2) \|_2^2 =  \int_{-\infty}^{\infty} |v(t) - v(t-\theta_d)|^2 dt.
\]
Lemma~\ref{lm:sobolev_shift} follows directly from the fact that the Fourier transform preserves energy:
\begin{align*}
	\int_{-\infty}^{\infty} |v(t) - v(t-\theta_d)|^2 dt		
	&= \frac{1}{2\pi}\int_{-\infty}^\infty|\hat{v}(\xi)|^2\,
	|1-e^{-j\xi\theta_d}|^2~d\xi \\
	&\leq \frac{\theta_d^2}{2\pi}\int_{-\infty}^\infty|\hat{v}(\xi)|^2\,
	|\xi|^2~d\xi.
\end{align*}
For Lemma~\ref{lm:tv_shift}, we separate the integral into intervals of length $\theta_d$:
\begin{align*}
	\int_{-\infty}^{\infty} |v(t) - v(t-\theta_d)|^2 dt
	&= \int_0^{\theta_d} \sum_{k=-\infty}^{\infty} |v(t+k\theta_d) - v(t+(k-1)\theta_d)|^2 \\ 
	&\leq \int_0^{\theta_d} \left(\sum_{k=-\infty}^{\infty}|v(t+k\theta_d) - v(t+(k-1)\theta_d)| \right)^2 \\
	& \leq \theta_d\,\|v\|_{TV}^2 .
\end{align*}

\section{Proof of Lemma~\ref{lm:gsum}}

We have
\begin{align*}
	\sum_{j\geq 0} \frac{1}{2}2^{-j}g(u_j) &=
	\sum_{j\geq 0} \frac{1}{2}2^{-j}\left(a\sqrt{u_j} + bu_j + cu_j^2\right).
\end{align*}
Since $a\sqrt{u_j} + bu_j$ is a concave function of $u_j$ and since $\sum_{j\geq 0}2^{-j-1}=1$ and also $\sum_{j\geq 0}j2^{-j-1}=1$, by the Jensen inequality
\begin{align}
	\nonumber
	\sum_{j\geq 0} \frac{1}{2}2^{-j}\left(a\sqrt{u_j} + bu_j\right) 
	&\leq
	a\sqrt{\sum_{j\geq 0}\frac{1}{2}2^{-j}u_j } + b\sum_{j\geq 0}\frac{1}{2}2^{-j}u_j \\
	\label{eq:lu1}
	&= a\sqrt{u+p+q} + b(u+p+q).
\end{align}
Now using the fact that $\sum_{j\geq 0}j^2 2^{-j-1} = 3$,
\begin{align}
	\nonumber
	\sum_{j\geq 0} \frac{1}{2}2^{-j} cu_j^2 &=
	c\sum_{j\geq 0} \frac{1}{2}2^{-j}(u^2+p^2j^2+q^2 + 2upj + 2uq + 2pqj) \\
	\nonumber
	&= c(u^2 + 3p^2 + q^2 + 2up + 2uq + 2pq) \\
	\label{eq:lu2}
	&= c(u + p + q)^2 + 2cp^2.
\end{align}
The lemma is proved by combining \eqref{eq:lu1} and \eqref{eq:lu2}.


{\small
\bibliographystyle{plain}
\bibliography{csm-refs}
}

\end{document}